\documentclass[orivec,envcountsect,envcountsame,runningheads]{llncs}
\usepackage[T1]{fontenc}
\usepackage{graphicx}
\usepackage{microtype}
\usepackage{orcidlink}

\usepackage{prelude}

\newif\ifarxiv%
\arxivtrue%

\ifarxiv%
\RequirePackage[paperheight=235mm,paperwidth=155mm,textwidth=12.2cm,textheight=19.3cm,inner=46pt,top=23mm]{geometry}
\makeatletter
\def\@citecolor{blue}%
\def\@urlcolor{blue}%
\def\@linkcolor{blue}%

\makeatother
\else
\RequirePackage[paperheight=235mm,paperwidth=155mm,textwidth=12.2cm,textheight=19.3cm,inner=46pt,top=18mm]{geometry}
\makeatletter
\def\@citecolor{blue}%
\def\@urlcolor{blue}%
\def\@linkcolor{blue}%

\makeatother
\fi%

\title{Partial Reductions for Kleene Algebra\texorpdfstring{\\}{ }with Linear Hypotheses}
\titlerunning{Partial Reductions for KA with Linear Hypotheses}

\author{Liam Chung \orcidlink{0009-0006-1656-3897}\and Tobias Kappé \orcidlink{0000-0002-6068-880X}} 
\authorrunning{L. Chung and T. Kappé}
\institute{LIACS, Leiden University, The Netherlands \newline
\email{\{l.w.chung,t.w.j.kappe\}@liacs.leidenuniv.nl}}

\begin{document}

\maketitle

\begin{abstract}
  Kleene algebra $(\KA)$ is an important tool for reasoning about general program equivalences, with a decidable and complete equational theory. 
  However, $\KA$ cannot always prove equivalences between specific programs.
  For this purpose, one adds hypotheses to $\KA$ that encode program-specific knowledge.
  Traditionally, a map on regular expressions called a \emph{reduction} then lets us lift decidability and completeness to these more expressive systems.
  Explicitly constructing such a reduction requires significant labour.
  Moreover, due to regularity constraints, a reduction may not exist for all combinations of expression and hypothesis.

  \smallskip
  We describe an automaton-based construction to mechanically derive reductions for a wide class of hypotheses.
  These reductions can be partial, in which case they yield \emph{partial completeness}: completeness for expressions in their domain.
  This allows us to automatically establish the provability of more equivalences than what is covered in existing work.

\end{abstract}
\section{Introduction}\label{sec:intro}

Program equivalence is a high-value but generally undecidable problem.
In spite of this, there are alternatives for the enterprising computer scientist, by modifying the problem or restricting its domain.
One possibility is to model program behaviour using \emph{regular expressions}, and then use the laws of \emph{Kleene algebra} (\KA) to prove equivalence, typically with respect to a standard, language-based semantics.
Kozen~\cite{kozen-thm} proved that \KA is sound and complete with respect to this semantics.
Due to their connection with finite automata, equivalence of regular expressions is decidable, and so we can decide provable equivalence, too.

Unfortunately, many true equivalences are unprovable by pure \KA, as the standard language semantics is concerned with \emph{propositional equivalences}, i.e., those that hold for \emph{any} interpretation of the programs' primitive statements.
In particular, this excludes the reordering of independent operations.
\KA thus trades expressive power for decidability: one can prove equivalence of propositionally similar programs, while complex equivalences remain unprovable by \KA alone.

At the same time, often there is more program-specific information available, in the form of (in)equations of the form $e \leq f$ for $e,f$ regular expressions. 
By introducing these into the proof system as \emph{hypotheses}, new equivalences become provable.
Examples include the properties of Boolean assertions~\cite{kat} as well as the behaviour of network packets~\cite{netkat} and concurrent programs~\cite{ckao}.

By altering \KA to make new equivalences provable, however, the system is no longer sound with respect to regular language semantics.
For example, $\la\lb$ and $\lb\la$ have different languages, though we might add a hypothesis to say that $\la\lb = \lb\la$.
A new semantics is therefore necessary; for example in the case of \emph{Kleene algebra with tests}~\cite{kat} the \emph{guarded string semantics}~\cite{kat-comp-dec} was proposed.
In general, significant work is required to develop such a semantics and subsequently recover soundness, completeness, and decidability --- cf.~\cite{kat-comp-dec,kao,ckao,netkat}.

To make matters worse, not every extension of \KA admits such results.
For instance, \emph{commutativity hypotheses} like the aforementioned $\la\lb = \lb\la$ can be used to encode Post's correspondence problem as an equivalence in \KA with hypotheses, dashing any hope of decidability~\cite{kat,kuznetsov,zhang}.
This is unfortunate, as commutativity hypotheses are especially useful in program equivalence reasoning~\cite{kat}.

In recent literature a meta-theory has developed around \emph{Kleene algebra with hypotheses}.
The cornerstone of this approach is the \emph{hypothesis closure} operation on languages~\cite{doumane}, which gives rise to a sound-by-construction semantics of \KA with additional equations.
The goal then becomes to recover completeness and decidability, by constructing a \emph{reduction}: a map on regular expressions that syntactically implements hypothesis closure, such that the regular expression semantics of a reduced expression coincides with its hypothesis-closed semantics.
Existing work discusses how to construct reductions for some kinds of hypotheses~\cite{ka-eq,doumane}, or combine reductions for different hypotheses~\cite{tools}.

The Achilles heel of this approach is that a reduction must be defined on \emph{all} expressions.
Sometimes, however, hypothesis closure does not preserve regularity, preventing a syntactic realization of its effects; this includes the aforementioned combination of expressions and commutativity hypotheses that encode Post's correspondence problem~\cite{kat}.
In such cases, the method falters, even though regularity might be preserved for \emph{some} expressions.
Our key idea is to define \emph{partial reductions}, which in turn give completeness over their domain of definition.
Using this relaxed notion of reduction, we can recover a limited form of completeness for \KA with hypotheses in a wider variety of scenarios.

The following comprise the core technical contributions of this work:
\begin{itemize}
\item a translation of the hypothesis closure on languages into a construction on automata, for hypotheses of the form $e \leq w$ (where $w$ is a word);
\item a proof that, when this operation produces a finite automaton, it gives rise to a reduction to $\KA$; and
\item a new technique to automatically check and prove equivalence of expressions in \KA with hypotheses in this format, limited to expressions on whose automaton our construction has finite output.
\end{itemize}
If our construction always outputs a finite automaton for a given hypothesis, our method yields a full completeness proof for \KA extended with that hypothesis.

\paragraph{Overview.}
The remainder of this paper is organised as follows.
\cref{sec:background} gives context and necessary definitions.
\cref{sec:patching} discusses and verifies an initial version of our construction, which is improved in \cref{sec:saturation}.
We conclude in \cref{sec:conclusion}.
\ifarxiv%
For brevity's sake, proofs are sketched; full arguments appear in Appendix~\ref{app:proofs}.
\else%
For brevity's sake, proofs are sketched; details appear in the full version~\cite{arxivversion}.
\fi%

\section{Background}\label{sec:background}

Kleene algebra (\KA) models a program as a set of instructions that, depending on various factors (the input, ambient environment) may execute in different ways.
The instructions of the program (e.g., \texttt{print("hi")} or \texttt{x := 5}) are represented as \emph{letters} in a finite \emph{alphabet} $\Sigma = \{ \la, \lb, \dots \}$. 
Each possible program execution is a finite sequence of these letters, which creates a \emph{word}, or element of $\Sigma^{*}$.
The semantics of a program is then a set of words, known as a \emph{language}.

By focusing on the sequences of events, \KA abstracts from their specific meaning --- equivalence in \KA is only concerned with \emph{which actions happen, and in what order}.
We refer to this as \emph{propositional equivalence}.
\begin{example}\label{ex:factor-if}
The following programs are propositionally equivalent --- they behave the same regardless of the effects of $\test_1$, $\pa$, $\pb$, and $\pc$.
\begin{prog}[after={}, width=.49\linewidth,
equal height group=A]{}{}
\begin{pseudo}[indent-mark]
\kw{if} $\test_1$ \\+
  $\pa$ \\
  $\pb$ \\-
\kw{else} \\+
$\pc$ \\
$\pb$
\end{pseudo}
\end{prog}
\hfill
\begin{prog}[before={}, width=.49\linewidth,
equal height group=A,label=program:pb-outside]{}{}
\begin{pseudo}[indent-mark]
\kw{if} $\test_1$ \\+
  $\pa$ \\-
\kw{else} \\+
  $\pc$ \\-
$\pb$
\end{pseudo}
\end{prog}
\end{example}

To represent the programs themselves, we use \emph{regular expressions}.
These are formed over the alphabet $\Sigma$, and composed inductively using sequencing ($e \cdot f$), non-deterministic choice ($e + f$) and iteration ($e^{*}$).
Together, these operations can be used to model traditional program composition --- often in conjunction with so called ``tests'' to represent boolean control flow~\cite{kat}.
For instance, the latter program can be modelled as $(\test_1 \cdot \pa + \overline{\test_1} \cdot \pc) \cdot \pb$, where $\test_{1}$ (resp.\ $\overline{\test_1}$) should be read an instruction asserting that $\test_1$ does (resp.\ does not) hold, which aborts the non-deterministic branch on failure.

We formalise these operations, and their connection to languages, here. From here on, we fix a finite alphabet $\Sigma$.

\begin{definition}[Regular expressions, regular language]\label{def:regex}
The set of \emph{regular expressions} $\Exp$ is formed by the following grammar:
\[ e, f ::= 0 \mid 1 \mid \la \in \Sigma \mid e \cdot f \mid e + f \mid e^{*} \]
Every regular expression $e \in \Exp$ can be assigned a regular language $\sem{e} \subseteq \Sigma^{*}$:
\begin{mathpar}
    \sem{0} = \emptyset
    \and
    \sem{1} = \{ \epsilon \}
    \and
    \sem{\la} = \{ \la \}
    \and
    \sem{e + f} = \sem{e} \cup \sem{f}
    \and
    \sem{e \cdot f} = \{ wx : w \in \sem{e}, x \in \sem{f} \}
    \and
    \sem{e^{*}} = \{ w_0 w_1 \cdots w_n : w_0, \dots, w_n \in \sem{e} \}
\end{mathpar}
Here, $ww'$ denotes the concatenation of $w$ and $w'$, and $\epsilon$ denotes the empty word.

A language $L \subseteq \Sigma^{*}$ such that $L = \sem{e}$ for some $e \in \Exp$ is called \emph{regular}.
\end{definition}

We can also represent potential orderings of events using \emph{(finite) automata}. 

\begin{definition}[Automaton]\label{def:automaton}
An \emph{automaton} $\MX$ is a tuple $(X, \rightarrow, x_{\oplus}, x_0)$, in which $X$ is a set of \emph{states}, ${\rightarrow} \subseteq X \times (\Sigma \cup \{\epsilon\}) \times X$ is the \emph{transition relation}, and $x_{\oplus}, x_0 \in X$ are the \emph{final} and \emph{initial state} respectively.
$\MX$ is \emph{finite} when $X$ is; the collection of automata (resp.\ finite automata) is written $\NA^\infty$ (resp.\ $\NA$).

When $(x, a, x') \in {\rightarrow}$ for some $a \in \Sigma \cup \{ \epsilon \}$, we write $x \tr{a} x'$.
Using similar notation, we can then define $\rightarrow^{*}$ as the smallest subset of $X \times {\Sigma^{*}} \times X$ satisfying:
\begin{mathpar}
    \inferrule{%
        x \tr{\la} x'
    }{%
        x \str{\la} x'
    }
    \and
    \inferrule{%
        x \tr{\epsilon} x'
    }{%
        x \str{\epsilon} x'
    }
    \and
    \inferrule{%
        x \str{u} x'
        \and
        x' \str{v} x''
    }{%
        x \str{uv} x'
    }
\end{mathpar}

An element $(x, w, x') \in {\rightarrow^{*}}$ is known as an \emph{$\MX$-trace} for $w$; it is called \emph{accepting} when $x' = x_{\oplus}$, in which case we say $x$ \emph{accepts} $w$.
For a state $x \in X$, \emph{the language accepted} by $x$ is the set of words it accepts:
\[ l_\MX(x) := \{ w \in \words : x \str{w} x_{\oplus} \} \]
The language accepted by $\MX$, denoted $L_\MX$, is the language of its initial state $x_0$.
\end{definition}

The intuitive link between automata and regular expressions as program specifications is legitimised by \emph{Kleene's theorem}~\cite{kleene}, which states that any automaton can be converted into a regular expression with the corresponding language, and vice versa --- thus, they describe the same set of languages.

\begin{theorem}[Kleene's theorem]\label{thm:kleene}
A language $L \in \langs$ is regular if and only if it is the language of a finite automaton. That is, there is an $e \in \Exp$ such that $\sem{e} = L$ if and only if there is a finite automaton $\MX$ such that $L_\MX = L$.
\end{theorem}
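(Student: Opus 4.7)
The plan is to prove the two directions separately, using standard constructions adapted to the $\epsilon$-transition model of \cref{def:automaton}. For the forward direction (regular expression to automaton), I would proceed by structural induction on $e \in \Exp$, following a Thompson-style construction. The base cases produce small automata for $0$ (two states, no transitions), $1$ (a single $\epsilon$-transition from the initial to the final state), and each $\la \in \Sigma$ (a single $\la$-transition). For the inductive cases $e + f$, $e \cdot f$, and $e^*$, I would glue the automata for $e$ and $f$ supplied by the inductive hypothesis using fresh initial and final states connected by $\epsilon$-transitions in the usual way. In each case, a short argument unfolding $\rightarrow^*$ on the resulting traces shows that $L_\MX = \sem{e}$.

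For the reverse direction (automaton to regular expression), I would use state elimination. Given a finite $\MX = (X, \rightarrow, x_\oplus, x_0)$, first normalise by adding fresh initial and final states connected to $x_0$ and $x_\oplus$ via $\epsilon$-transitions, so that no intermediate state is itself initial or final. Next, extend the transition relation to carry regular-expression labels rather than individual letters, collapsing parallel edges by taking their sum. Eliminating an intermediate state $x$ then consists, for every pair of neighbours $(y,z)$, of replacing all paths of the form $y \to x \to \cdots \to x \to z$ by a single edge $y \to z$ labelled $\alpha \cdot \beta^* \cdot \gamma$, where $\alpha$, $\beta$, and $\gamma$ are the labels of the $y \to x$, $x \to x$, and $x \to z$ edges respectively. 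After eliminating every intermediate state, the label on the remaining edge between the fresh initial and final states is the desired regular expression.

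The main obstacle is bureaucratic rather than conceptual: both constructions are classical, but they involve care around $\epsilon$-transitions and around the generalised transition relation $\rightarrow^*$. The most technical step is verifying that each elimination step preserves the accepted language; the $\beta^*$ factor in particular needs the iterated visits to the eliminated state to be re-expressed as a Kleene star, which is a short induction on the number of such visits. Since the theorem is classical and used only as background machinery in the sequel, I would expect a sketch of these two constructions with a pointer to standard references for the fine detail, rather than a full line-by-line verification.
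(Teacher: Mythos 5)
Your proposal is correct: both directions are the classical constructions (Thompson-style translation for expressions to automata, state elimination for the converse), and they do establish the statement as given. The paper itself does not prove this theorem --- it is cited as background --- so there is no ``paper proof'' to deviate from. That said, it is worth noting that the surrounding development realizes the automaton-to-expression direction differently: via \emph{least solutions} to the system of inequations an automaton induces (\cref{def:soln-aut}, \cref{lem:soln-language}, \cref{lem:least-soln-construct}), computed by the matrix method of Kozen and Conway. That route is what the paper actually leans on, because it yields not merely a language-equivalent expression but one whose relationship to the automaton is \emph{provable in} \KA, which is essential for the reduction arguments in \cref{sec:patching,sec:saturation}. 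Your state-elimination argument gives language equality (all that \cref{thm:kleene} asserts) but would need extra work to recover those provability guarantees; for the theorem as stated, however, it is entirely adequate, and your instinct to defer the bookkeeping around $\epsilon$-transitions and $\rightarrow^{*}$ to standard references matches the paper's own treatment.
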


Kleene's theorem allows us to move seamlessly between regular expressions and finite automata.
It becomes straightforward to decide equivalence of regular expressions, by simply checking whether their automata are equivalent.

One can also reason algebraically about equivalence of regular expressions, using the rules that make up the proof system \KA~\cite{kozen-thm}.
These include familiar laws such as associativity and commutativity of $+$, in addition to the \emph{fixpoint rules} that dictate the behaviour of the Kleene star.

\begin{definition}[Kleene algebra]\label{def:KA}
The proof system \KA has these axioms:
\begin{mathpar}
  e + 0 = e
  \and
  e + e = e
  \and
  e + f = f + e
  \and
  e + (f + g) = (e + f) + g
  \\
  e \cdot 0 = 0 \cdot e = 0
  \and
  e \cdot 1 = 1 \cdot e = e
  \and
  e \cdot (f \cdot g) = (e \cdot f) \cdot g
  \\
  e \cdot (f + g) = e \cdot f + e \cdot g
  \and
  (e + f) \cdot g = e \cdot g + f \cdot g
  \and
  1 + e \cdot e^{*} = 1 + e^{*} \cdot e = e^{*}
  \\
  e + f \cdot g \leq g \Longrightarrow f^{*} \cdot e \leq g \quad
  \and
  e + f \cdot g \leq f \Longrightarrow e \cdot g^{*} \leq f
\end{mathpar}
where in the last two rules, we use $e \leq f$ as a shorthand for $e + f = f$.

If $\KA \vdash e = f$, we say that $e \equiv f$; equivalently, $\equiv$ is the least congruence on $\Exp$ satisfying the rules of \KA.
We write $e \leqq f$ as a shorthand for $e + f \equiv f$.
\end{definition}

It is not too hard to show that these rules are \emph{sound} for language equivalence.
A landmark result by Kozen~\cite{kozen-thm} shows that \KA is \emph{complete} for the standard language semantics --- i.e., every true $\sem{-}$ equivalence is provable in \KA.

\begin{theorem}[Kozen's theorem]\label{thm:kozen}
\KA is sound and complete w.r.t.\ $\sem{-}$: for all expressions $e,f \in \Exp$, we have that $\sem{e} = \sem{f}$ if and only if  $e \equiv f$.
\end{theorem}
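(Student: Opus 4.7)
My plan is to establish soundness and completeness separately, as is standard.

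For \textbf{soundness} I would verify that each axiom of \KA holds under the language interpretation $\sem{-}$ of \cref{def:regex}. The purely equational axioms reduce to routine set-theoretic identities (associativity and commutativity of union, distributivity, and unfolding of the star via $\sem{e^{*}} = \bigcup_{n \geq 0} \sem{e}^{n}$). The two fixpoint implication rules require slightly more care: from $\sem{e} \cup \sem{f} \cdot \sem{g} \subseteq \sem{g}$ one shows by induction on $n$ that $\sem{f}^{n} \cdot \sem{e} \subseteq \sem{g}$, hence $\sem{f^{*} \cdot e} \subseteq \sem{g}$; the right-hand rule is symmetric.

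\textbf{Completeness} is the hard direction. The plan is to use \cref{thm:kleene} to reduce the semantic hypothesis $\sem{e} = \sem{f}$ to a statement about finite automata, and then to extract a \KA-proof from an automaton-theoretic argument. Concretely, I would associate to every finite automaton $\MX$ a system of linear expressions indexed by its states, encoded as a matrix $A \in \Exp^{X \times X}$ recording the letters labelling transitions together with a column vector $b$ marking the final state; the language of $\MX$ is then recovered by solving the fixpoint equation $s = A \cdot s + b$. Using the fixpoint rules of \KA, I would show that this system admits a canonical solution whose $x_{0}$-component is \KA-provably equal to any other expression representing $L_\MX$. Applied to automata built from $e$ and $f$, this translates $\sem{e} = \sem{f}$ into $e \equiv f$ via a bisimulation between their state spaces (or, equivalently, via a comparison with their shared minimal deterministic automaton).

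The main obstacle is the step extracting the canonical solution. The fixpoint axioms only provide \emph{least} solutions to $s \leqq A \cdot s + b$, whereas the bridge between semantics and syntax demands that linear systems coming from automata have essentially a \emph{unique} expression-solution up to $\equiv$. The standard route is to first determinise and minimise, and then argue uniqueness by induction on the number of states, eliminating one variable at a time by using the left fixpoint rule to absorb self-loops into a star. Most of the technical effort in the classical proof lives here, and any sketch that glosses over it will have merely hidden the difficulty rather than resolved it.
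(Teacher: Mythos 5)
The paper does not prove this statement: it is Kozen's classical completeness theorem, imported as background with a citation to~\cite{kozen-thm}, so there is no in-paper argument to compare yours against. Judged on its own terms, your sketch follows the standard route (soundness by checking axioms in the language model; completeness via \cref{thm:kleene}, matrix-encoded linear systems, and determinisation/minimisation), and you correctly locate where the real work lives.

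One substantive caution about the step you flag as the main obstacle. As written, your bridge relies on linear systems arising from automata having an essentially \emph{unique} expression-solution up to $\equiv$, to be established by Gaussian-style variable elimination. That is Salomaa's strategy, and it does not go through in \KA as stated: the fixpoint rules only yield that the constructed solution $A^{*}b$ is \emph{least} among solutions of $A \cdot s + b \leqq s$, and the converse inequality $y \leqq A^{*}b$ for an arbitrary solution $y$ of $y = A \cdot y + b$ is not derivable from the \KA axioms, even for guarded systems --- unique-fixpoint principles with empty-word-property side conditions are not algebraically valid and fail in nonstandard Kleene algebras. Kozen's actual proof avoids uniqueness entirely: it works with least solutions throughout and shows that the automaton constructions (epsilon-elimination, determinisation, minimisation) are mirrored by \emph{simulations}, encoded as matrices $X$ satisfying $XA = A'X$, which transfer the provable value of one system to the other; equality then follows because the two minimal DFAs are isomorphic, not because solutions are unique. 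So your sketch needs either that simulation machinery in place of the uniqueness claim, or a genuinely different argument for why elimination pins down the solution up to $\equiv$. The rest of the outline (soundness, the reduction to automata, the role of the minimal DFA) is fine.
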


Combining decidability of semantic equivalence for regular expressions with completeness of \KA, it follows that equivalence in \KA is decidable.

The key insight necessary to prove \cref{thm:kozen} is that every automaton can be thought of as a system of equations in \KA, and that solving this system yields an expression representing the language of a state~\cite{kleene,conway-alg-aut,kozen-thm}.
We will leverage this idea heavily in this paper, and so we take a moment to discuss it in depth.

\begin{definition}[Solution to automaton]\label{def:soln-aut}
Let $\MX = (X, \rightarrow, x_{\oplus}, x_0)$ be an automaton.
A \emph{solution} to $\MX$ is a map $s : X \to \Exp$ satisfying the following rules:
\begin{mathpar}
    \inferrule{~}{%
        1 \leqq s(x_\oplus)
    }
    \and
    \inferrule{%
        x \tr{\la} x'
    }{%
        \la \cdot s(x') \leqq s(x)
    }
    \and
    \inferrule{%
        x \tr{\epsilon} x'
    }{%
        s(x') \leqq s(x)
    }
\end{mathpar}
A \emph{least} solution $s$ is a solution where $s(x) \leqq s'(x)$ for all solutions $s'$ and $x \in X$.
\end{definition}

Since least solutions are unique up to $\equiv$, we often speak of \emph{the} least solution to an automaton.
Solutions are connected to the languages in the following way.

\begin{lemma}\label{lem:soln-language}
For any solution $s : X \to \Exp$ to an automaton $\MX$, we have that for every $x \in X$, $l_\MX(x) \sube \sem{s(x)}$.
This inclusion is an equality when $s$ is least.
\end{lemma}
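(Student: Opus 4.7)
The plan is to split the statement into two halves and handle them independently.

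For the inclusion $l_\MX(x) \subseteq \sem{s(x)}$, valid for any solution $s$, the approach is to prove the auxiliary strengthening: whenever $x \str{w} x'$, the inequality $w \cdot s(x') \leqq s(x)$ holds in \KA, where $w$ is read as the expression $\la_1 \cdots \la_n$ (and as $1$ when $w = \epsilon$). This proceeds by induction on the derivation of $x \str{w} x'$ according to the three closure rules of $\rightarrow^*$: the single-letter case is exactly the second solution rule; the $\epsilon$-case is the third solution rule, after noting $1 \cdot s(x') \equiv s(x')$; and the concatenation case chains two applications of the IH using monotonicity of $\cdot$ with respect to $\leqq$. Specialising to $x' = x_\oplus$ and combining with $1 \leqq s(x_\oplus)$ yields $w \leqq w \cdot s(x_\oplus) \leqq s(x)$, and the soundness direction of Theorem~\ref{thm:kozen} then gives $w \in \sem{s(x)}$.

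For the equality when $s$ is the least solution, the idea is to exhibit a second solution $\hat{s}$ with $\sem{\hat{s}(x)} = l_\MX(x)$ for every $x \in X$. Leastness of $s$ then forces $s(x) \leqq \hat{s}(x)$ componentwise, hence $\sem{s(x)} \subseteq l_\MX(x)$, which together with the first half yields equality. To construct $\hat{s}$ I pick, for each $x$, a regular expression whose language is exactly $l_\MX(x)$. Verifying that $\hat{s}$ is a solution reduces, via Theorem~\ref{thm:kozen}, to three language-theoretic facts that read off directly from the closure rules for $\rightarrow^*$: $\epsilon \in l_\MX(x_\oplus)$, and whenever $x \tr{\la} x'$ (resp.\ $x \tr{\epsilon} x'$), any accepting trace from $x'$ can be prefixed by $\la$ (resp.\ by nothing) to form an accepting trace from $x$, giving the required inclusion.

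The main technical point is the witness construction in the second half: it requires each $l_\MX(x)$ to be regular in order for a suitable $\hat{s}(x) \in \Exp$ to exist. For finite $\MX$ this is an immediate application of Kleene's theorem (Theorem~\ref{thm:kleene}) to each state, viewed as the initial state of a sub-automaton. For general $\MX \in \NA^\infty$, regularity of each state's language must be assumed separately --- and is in any case implicit in the hypothesis that a least solution valued in $\Exp$ exists. Beyond this, both halves are essentially bookkeeping that moves between KA-inequalities and language inclusions via Theorem~\ref{thm:kozen}.
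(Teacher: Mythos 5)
Your argument is correct, but note that the paper itself gives no proof of this lemma --- it is stated as folklore underlying Kozen's completeness proof, with the relevant constructions deferred to \cite{kozen-thm,conway-alg-aut}. Measured against the classical development, your first half (induction on the derivation of $x \str{w} x'$ to get $w \cdot s(x') \leqq s(x)$, then soundness) is exactly the standard argument. Your second half, however, takes a genuinely different route: the classical proof establishes $\sem{s_\MX(x)} = l_\MX(x)$ by analysing the explicit least-solution construction of \cref{lem:least-soln-construct} (Arden's rule / matrix elimination), whereas you compare the least solution against a Kleene's-theorem witness $\hat{s}$ and certify that $\hat{s}$ is a solution by invoking the completeness direction of \cref{thm:kozen}. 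This is shorter and avoids touching the construction, but it inverts the usual logical order: completeness of \KA{} is traditionally proved \emph{using} this lemma, so your argument would be circular in a from-scratch development. It is admissible here only because the paper black-boxes \cref{thm:kozen} as an external result stated before the lemma; you should say so explicitly rather than leave the dependency implicit. Two smaller points: your remark that regularity of each $l_\MX(x)$ is ``implicit in the existence of a least solution'' is not justified (it is conceivable a priori that a least solution exists while some state language is non-regular, in which case the equality would fail); the honest restriction is to finite $\MX$, which is all the paper ever needs. And the fact $\epsilon \in l_\MX(x_\oplus)$ does not literally follow from the three closure rules for $\rightarrow^{*}$ as printed, which omit reflexivity --- this is an oversight in the paper's \cref{def:automaton} rather than in your proof, but worth flagging since your witness construction depends on it.
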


Crucially, for any finite automaton a least solution exists, and we can compute it.
This is typically done with a matrix representation of the transitions of the automaton that we will not cover here; we refer to~\cite{kozen-thm,conway-alg-aut} for more information.

\begin{lemma}\label{lem:least-soln-construct}
For any finite automaton, we can construct a least solution $s_\MX$.
\end{lemma}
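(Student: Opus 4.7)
The plan is to reduce the problem to solving a linear system over \KA, via the matrix construction going back to Conway~\cite{conway-alg-aut} and used by Kozen~\cite{kozen-thm}. Fix an enumeration $x_1, \dots, x_n$ of the states of $\MX$, and encode the transition structure into a matrix $M \in \Exp^{n \times n}$ whose $(i,j)$-entry is the expression $\sum_{x_i \tr{a} x_j} a$ (summing over $\Sigma$-transitions) plus $1$ if $x_i \tr{\epsilon} x_j$, and a vector $v \in \Exp^n$ with $v_i = 1$ when $x_i = x_\oplus$ and $v_i = 0$ otherwise.

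Unpacking \cref{def:soln-aut} and using the axioms for $+$ from \cref{def:KA}, a map $s : X \to \Exp$ is a solution iff the vector $\vec{s}$ with $\vec{s}_i = s(x_i)$ satisfies $M \vec{s} + v \leqq \vec{s}$ pointwise, and it is the least solution iff $\vec{s}$ is least with this property. So it suffices to produce a least such $\vec{s}$ from $M$ and $v$.

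To this end, I would lift $+$, $\cdot$ and $(-)^{*}$ to matrices over $\Exp$ in the usual way: entrywise sum, matrix product, and a recursive definition of the star by block decomposition,
\[
\begin{pmatrix} A & B \\ C & D \end{pmatrix}^{*}
= \begin{pmatrix} F & F B D^{*} \\ D^{*} C F & D^{*} + D^{*} C F B D^{*} \end{pmatrix},
\qquad F := (A + B D^{*} C)^{*},
\]
with base case $(e)^{*} = e^{*}$. The heart of the argument is to show, by induction on $n$ and via repeated use of the fixpoint rules of \KA, that this matrix star satisfies lifted versions of the two fixpoint axioms: $M^{*} v$ is provably a solution of $\vec{s} \geqq M \vec{s} + v$ (so we can set $s_{\MX}(x_i) := (M^{*} v)_{i}$), and for any other solution $\vec{s}'$ we have $M^{*} v \leqq \vec{s}'$ pointwise. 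The induction step effectively performs one round of Arden-style variable elimination, using the axiom $1 + e\cdot e^{*} = e^{*}$ and the two implication rules to justify replacing $(A + BD^{*}C)$ with its star.

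The main obstacle is precisely this verification that the block-recursive star behaves like a star at the matrix level and gives genuine least solutions under \KA-provability, not merely semantically; the calculations are notationally heavy but standard, and are carried out in detail in Kozen~\cite{kozen-thm} and Conway~\cite{conway-alg-aut}, which we can invoke. With that lemma in hand, finiteness of $X$ is exactly what makes the matrix construction, and hence $s_{\MX}$, effectively computable.
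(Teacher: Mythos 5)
Your proposal is correct and matches the paper's approach exactly: the paper also defers to the matrix representation of the automaton's transition structure and the block-recursive matrix star from Kozen~\cite{kozen-thm} and Conway~\cite{conway-alg-aut}, which you have simply spelled out in more detail. No gaps; the heavy lifting you identify (verifying the matrix star yields least solutions under \KA-provability) is precisely what those references carry out.
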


\subsection{Hypotheses for Kleene Algebra}\label{subsec:hyp-for-ka}

Because of \KA's focus on propositional equivalence, it concedes significant expressive abilities, leaving many program equivalences on the table.

\begin{restatable}{example}{commWhile}\label{ex:comm-while}
Suppose $\pb$ and $\pa$ are interchangeable in order with no effect on program behaviour, perhaps because they operate on distinct parts of memory; and $\pb$ has no effect on whether $\test_1$ holds.
Then these programs are equivalent:
\begin{prog}[after={}, width=.49\linewidth,
equal height group=B]{}{}
\begin{pseudo}[indent-mark]
$\pb$ \\
\kw{while} $\test_1$ \\+
  $\pa$
\end{pseudo}
\end{prog}
\hfill
\begin{prog}[before={}, width=.49\linewidth,
equal height group=B]{}{}
\begin{pseudo}[indent-mark]
\kw{while} $\test_1$ \\+
  $\pa$ \\-
$\pb$
\end{pseudo}
\end{prog}
\noindent
Still, the corresponding regular expressions have different languages:
\[ \sem{
    \pb \cdot {(\test_1 \cdot \pa)}^{*} \cdot \overline{\test_{1}}
  }
  \neq
  \sem{
    {(\test_1 \cdot \pa)}^{*} \cdot \overline{\test_{1}} \cdot \pb
  } \]
It \emph{is} however possible to prove equivalence of these expressions using \KA, under assumptions about the primitives such as $\pa \cdot \pb = \pb \cdot \pa$,\ $\pb \cdot \test_1 = \test_1 \cdot \pb$,\ $\pb \cdot \overline{\test_1} = \overline{\test_1} \cdot \pb$.
\ifarxiv%
For a proof, see Appendix~\ref{app:example}.
\else%
For a proof, see~\cite[Appendix~A]{arxivversion}.
\fi%
\end{restatable}
This example illustrates that when we represent a program as a sequence of instructions, we forget that some programs have special relationships with others.
In spite of this, \KA can still be used to prove relevant equivalences, provided we incorporate assumptions into our reasoning in the form of (in)equations. 

To a significant extent, the situation can be improved by identifying general subclasses of programs that satisfy additional equations, and incorporating those into the theory.
A prime example is \emph{Kleene algebra with tests} ($\KAT$)~\cite{kat} which distinguishes a subset of the atomic programs as Boolean ``tests'', and adds Boolean reasoning principles into the proof system.
$\KAT$ can be given a semantics in terms of \emph{guarded languages}, w.r.t.\ which it is complete and decidable~\cite{kat-comp-dec}.

While this approach is feasible, it is an enormous amount of work to develop such a theory for an individual use-case where we have information that is not indicative of some deeper lack of expressiveness (e.g., Boolean assertions in the case of \KAT), but is rather just useful case-specific information that we want to employ in our reasoning.
Recent literature has developed a meta-theory called \emph{Kleene algebra with hypotheses}~\cite{ka-eq,doumane,ckao,tools}, which parameterises over the specific hypotheses used.
Through this abstraction, we can ask: for which hypotheses are classical results about completeness and decidability recoverable?

\begin{definition}\label{def:hypothesis}
A Kleene algebra \emph{hypothesis} is an (in)equation of two regular expressions, that is, $e \leq f$ for some $e,f \in \Exp$. 
A hypothesis is called a \emph{linear hypothesis} if it is of the form $e \leq w$ for some $w \in \Sigma^{*}$.

Given some set of hypotheses $H$, the proof system \KA augmented by the equations in $H$ as additional axioms is called $\KA_H$.
We define $\equiv_H$ as the smallest congruence that satisfies both the axioms of \KA and those asserted by $H$, with the $-_H$ subscript extended to define $\leqq_H$ in a similar manner to $\leqq$.
\end{definition}

\begin{remark}\label{rem:hyp-not-univ}
Letters in a hypothesis $H$ are \emph{not} universally quantified!
That is, if $H = \{ \la\la \leq \la \}$, the $\la$ is a specific letter; $\lb\lb \leqq_H \lb$ is only true when $\la = \lb$.
\end{remark}

We would like to show completeness and decidability of $\KA_H$ for as many sets of hypotheses $H$ as possible.
In doing so, we could be optimistic and hope for a ``silver bullet'' theorem that would work for any set of hypotheses, but that would contradict undecidability of program equivalence.
Indeed, even with commutativity hypotheses, equivalence between certain expressions is undecidable~\cite{kat,kuznetsov,zhang}.
This underscores that we cannot carelessly add axioms into \KA.

\medskip
Setting decidability and completeness aside for a moment, it is important to be clear about what semantics we are working with for $\KA_H$.
Indeed, adding any non-trivial hypotheses to the system will render it unsound with respect to the standard regular language semantics.
For example, if $H = \{ \la\lb \leq \lb\la \}$, then clearly $\la\lb \leqq_H \lb\la$, even though $\sem{\la\lb} \not\subseteq \sem{\lb\la}$.
So, to show completeness of $\KA_H$ (or indeed for that to mean much of anything) we first need a new semantics.
Doumane et al.\ introduced the \emph{hypothesis closure semantics}~\cite{doumane} for this purpose.

\begin{definition}\label{def:hyp-cl-lang}
  We define the \emph{one-step hypothesis closure} for hypotheses $H$:
\[
    H: \langs \to \langs
    \quad
    \text{given by}
    \quad
    L \mapsto L \cup \bigcup \{ u \sem{e} v : u \sem{f} v \subseteq L, e \leq f \in H \}
\]
Where $u \sem{f} v = \{ uwv : w \in \sem{f} \}$, and similarly for $\sem{e}$.

The \emph{hypothesis closure} of $L$, written $H^{*}(L)$, is the smallest superset of $L$ such that $H(H^{*}(L)) = H^{*}(L)$.
We can now define the \emph{hypothesis closure semantics}:
\[
    \sem{-}_H : \Exp \to \langs
    \quad\quad
    \text{given by}
    \quad\quad
    e \mapsto H^{*}(\sem{e})
\]
\end{definition}

For example, the $H$-closure of $\{\la\la\la\la \}$ under $\la \leq \la\la$ is $\{ \la, \la\la, \la\la\la, \la\la\la\la \}$.
$\KA_H$ is sound w.r.t.\ $\sem{-}_{H}$~\cite{doumane}, but completeness is generally much harder.
Methods for some hypotheses were proposed in~\cite{doumane}, which were later developed to recover completeness of $\KAT$ and $\netkat$~\cite{tools}.
Notably, the hypothesis closure semantics is isomorphic to the established semantics for these systems.

\subsection{Reductions to Kleene Algebra}\label{subsec:hyp-cl-intro}
The approach to completeness in existing work~\cite{doumane,tools,ckao,kat-comp-dec,cohen,ka-eq}, has been to leverage \cref{thm:kozen} by realising the semantics of the expanded system in syntax.
Formally, this takes the form of a \emph{reduction}~\cite{tools}.
We call such reductions ``total'', allowing for a ``partial'' version with the totality requirement relaxed.

\begin{definition}[Reduction]\label{def:reduction}
A \emph{total reduction} for a set of hypotheses $H$ is a total map $r: \Exp \to \Exp$ such that for every $e$ in $\Exp$, the following hold:
\begin{mathpar}
    \sem{e}_{H} \sube \sem{r(e)}
    \and
    r(e) \leqq_H e
\end{mathpar}
A \emph{partial reduction} for $H$ is a partial map $r': \Exp \rightharpoonup \Exp$ that satisfies the above two conditions for every $e$ in its domain.
\end{definition}

Partial reductions give us completeness and decidability over their domain; we record this restatement of a well-known fact~\cite{doumane,ckao} below.

\begin{lemma}\label{lem:reduction-complete}
Let $H$ be a set of hypotheses, and $r$ a partial reduction for $H$.
Then $\KA_H$ is complete (w.r.t.\ $\sem{-}_H$) and decidable over the domain of $r$: if $r$ is defined on $e$ and $f$, then $\sem{e}_H = \sem{f}_H$ implies $e \equiv_H f$, and $e \equiv_H f$ is decidable.
In particular, if $r$ happens to be total, the above is true for all $e, f \in \Exp$.
\end{lemma}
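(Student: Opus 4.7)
The plan is to first show the key intermediate fact that for every $e$ in the domain of $r$, both $\sem{r(e)} = \sem{e}_H$ and $e \equiv_H r(e)$ hold. The semantic equality has two directions. One direction, $\sem{e}_H \subseteq \sem{r(e)}$, is immediate from the definition of reduction. For the other direction, I use $r(e) \leqq_H e$ together with soundness of $\KA_H$ w.r.t.\ $\sem{-}_H$ (due to Doumane et al.) to deduce $\sem{r(e)}_H \subseteq \sem{e}_H$, and then chain this with the trivial inclusion $\sem{r(e)} \subseteq \sem{r(e)}_H$. For the provable equivalence, the inequality $r(e) \leqq_H e$ is given; the converse $e \leqq_H r(e)$ follows by observing $\sem{e} \subseteq \sem{e}_H = \sem{r(e)}$, invoking Kozen's theorem (\cref{thm:kozen}) to get $e \leqq r(e)$, and then weakening to $\equiv_H$.

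With this in hand, completeness over the domain is a short chain. Suppose $e, f$ are both in the domain of $r$ and $\sem{e}_H = \sem{f}_H$. Then $\sem{r(e)} = \sem{e}_H = \sem{f}_H = \sem{r(f)}$, so by Kozen's theorem $r(e) \equiv r(f)$, which lifts to $r(e) \equiv_H r(f)$. Combining this with the two ``bridge'' equivalences $e \equiv_H r(e)$ and $r(f) \equiv_H f$ established above yields $e \equiv_H f$, as required.

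For decidability over the domain, I would assume that the partial reduction $r$ is computable on its domain (this is standard for reductions; all constructions in the cited literature are effective). Then, given $e, f \in \mathrm{dom}(r)$, compute $r(e)$ and $r(f)$ and test $r(e) \equiv r(f)$, which is decidable by Kleene's and Kozen's theorems via automaton equivalence. This correctly decides $e \equiv_H f$, since $e \equiv_H f$ implies $\sem{e}_H = \sem{f}_H$ by soundness (hence $r(e) \equiv r(f)$ by the completeness argument above), while conversely $r(e) \equiv r(f)$ together with $e \equiv_H r(e)$ and $f \equiv_H r(f)$ trivially gives $e \equiv_H f$. The totality corollary is then automatic. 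No step here is genuinely hard; the only subtlety is keeping track of when each direction uses soundness versus completeness of $\KA$ versus $\KA_H$, and correctly exploiting both conjuncts of the reduction definition to recover the semantic equality $\sem{r(e)} = \sem{e}_H$ from what looks at first glance like two one-sided conditions.
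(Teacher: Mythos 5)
Your proposal is correct and follows essentially the same route as the paper's proof: establish $\sem{r(e)} = \sem{e}_H$ and $e \equiv_H r(e)$ using soundness of $\KA_H$ in one direction and Kozen's theorem in the other, then chain through $r(e) \equiv r(f)$ for completeness and reduce decidability to equivalence of regular expressions. Your explicit remark that $r$ must be computable on its domain is a reasonable clarification that the paper leaves implicit.
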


\begin{proof}
Note that by definition, for any $f \in \Exp$, $\sem{f} \sube \sem{f}_H$.
Since $r$ is a reduction, $r(e) \leqq_H e$; then by soundness of $\KA_H$, $\sem{r(e)}_H \sube \sem{e}_H$, so $\sem{r(e)} \sube \sem{e}_H$.
Again since $r$ is a reduction, $\sem{e}_H \sube \sem{r(e)}$, so $\sem{e} \sube \sem{r(e)}$. Therefore by \Cref{thm:kozen}, $e \leqq r(e)$.
So if $r$ is a reduction, $\sem{e}_H = \sem{r(e)}$ and $e \equiv_H r(e)$.

Now let $e,f \in \Exp$ in the domain of $r$ be such that $\sem{e}_H = \sem{f}_H$.
Then:
\[ \sem{r(e)} = \sem{e}_H = \sem{f}_H = \sem{r(f)}. \]
We can then apply \cref{thm:kozen} to find that $r(e) \equiv r(f)$ --- these expressions are equivalent by \KA alone.
Since $\KA_H$ has all of the rules of \KA, it follows that $r(e) \equiv_H r(f)$.
Lastly, we can then combine this with the other requirement for partial reductions to conclude that $e \equiv_H r(e) \equiv_H r(f) \equiv_H f$.

As for decidability, note that the above (in combination with soundness) tells us that to decide $e \equiv_H f$ is to decide whether $\sem{r(e)} = \sem{r(f)}$, and equivalence of regular expressions is decidable.
\end{proof}

Various works have defined reductions tailored to specific hypotheses.
For example, in~\cite{kao} (see also~\cite{ka-eq}) the ``contraction'' hypothesis $\la \leq \la\la$ is realised by applying ``transitive closure'' to automata, as detailed below.

\begin{example}\label{ex:contraction-trcl}
Let $H = \{ \la \leq \la\la \}$.
This hypothesis can quite naturally be seen as transitive closure at the automaton level.
Any two states connected by a sequence of two \la-transitions should also be connected by just one \la-transition:
\[\begin{tikzcd} 
	\bullet & \bullet & \bullet
	\arrow["\la"', from=1-1, to=1-2]
	\arrow["\la", curve={height=-12pt}, dashed, from=1-1, to=1-3]
	\arrow["\la"', from=1-2, to=1-3]
\end{tikzcd}\]
For any expression $e$, converting $e$ to an automaton, applying this ``transitive closure'', and converting back to regular expression constitutes a total reduction.
\end{example}

Indeed many reductions are defined through automata in some fashion: we can think of a hypothesis $e \leq f$ as ``if we can move between two states with any word from $f$, we should be able to do the same while reading any word from $e$''.
In contrast, working directly on expressions to define a reduction is difficult, because their inductive structure is of little help.

We therefore approach the general reduction problem by converting a regular expression $g$ into an automaton, applying a closure construction there, and then converting the result back into an expression $r(g)$.
We can then use algebraic representations of the automata to prove that $r(g) \leqq_H g$.

Of course, for a reduction of $g$ to be feasible at all, $\sem{g}_H$ needs to be a regular language; if it is not, then there is no hope of finding a sensible expression for $r(g)$.
This can happen for commutativity hypotheses.

\begin{example}\label{ex:ba-ab-undec}
Let $H := \{ \la\lb \leq \lb\la \}$.
Then $\sem{{(\la\lb)}^{*}}_H$ cannot be regular, because otherwise $\sem{{(\la\lb)}^{*}}_H \cap \sem{\la^{*}\lb^{*}}$ would also be regular, and the latter is precisely $\{ \la^{n}\lb^{n} : n \in \mathbb{N} \}$, which is a well-known non-regular (but context free) language.
\end{example}

While a total reduction, which gives a decision procedure and completeness all in one, is undoubtedly the best case scenario, it fails for certain combinations of hypothesis and expression, as in \cref{ex:ba-ab-undec}.
However, this also means giving up on many useful equivalences, even if all of the (hypothesis-closed) languages involved are regular.
Our main innovation is a technique to construct reductions for \emph{some} expressions w.r.t.\ hypotheses $H$, even if it might fail for others.
We thereby widen the perspective beyond systems we can show to be decidable and complete via a total reduction.

\section{Partial Reduction: Patching}\label{sec:patching}

We now define a notion of hypothesis closure on automata.
When paired with standard algorithms converting between regular expressions and automata (e.g.~\cite{thompson}), this operation yields a partial reduction for the given hypothesis.

\begin{restatable}{assume}{bigAssumption}
    From here on, unless specified, we use a hypothesis set $H = \{ e \leq w \}$, and an automaton $\MX := (X, \rightarrow, x_{\oplus}, x_{0})$.
\end{restatable}

We first state the desired effect on one state.
Recall that the hypothesis $e \leq w$ enforces that if there is a word in $L$ with $w$ as a subword --- i.e., $uwv \in L$ --- then we can replace $w$ with any word from $\sem{e}$, giving $uw'v \in H(L)$ for each $w' \in \sem{e}$.
To mimic this effect on automata, we will seek out states where we can read $w$ to reach some other state.
We then add an alternate path: instead of following $w$, step to a new automaton recognising $\sem{e}$, and from the accepting state of that automaton, go back to any of the states we could have gone to with $w$.
\[\begin{tikzpicture}[baseline=(current bounding box.center)]

        \node (s0) {$\circ$};
        \node[right=of s0] (s1) {$\bullet$};
        \node[right= 1.8cm of s1] (s2) {$\bullet$};
        \node[right=of s2] (s3) {$\oplus$};
        \node[below=of s1] (s4) {$\bullet$};
        \node[below=of s2] (s5) {$\bullet$};

        \draw[->, decorate] (s0) edge
          node[above] {$\cdots$}
          node[above,pos=1.05] {$*$}
        (s1);

        \draw[->] (s1) edge
          node[above] {$w$}
          node[above,pos=1.02] {$*$}
        (s2);
        \draw[->] (s2) edge
          node[above] {$\cdots$}
          node[above,pos=1.02] {$*$}
        (s3);
        \draw[->] (s4) edge (s5);
        \draw[->,transform canvas={yshift=-.1cm}] (s4) edge
          node[below] {$\sem{e}$}
        (s5);
        \draw[->,transform canvas={yshift=.1cm}] (s4) edge
          node[above,pos=1.02] {$*$}
        (s5);
        \draw[->] (s1) edge node[left] {$\epsilon$} (s4);
        \draw[->] (s5) edge node[right] {$\epsilon$} (s2);

\end{tikzpicture}\]

\begin{definition}[Patching]\label{def:patching}
Let $x \in X$, let $\MZ := (Z, \rightarrow_\MZ, z_\oplus, z_0)$ be an automaton.
We define the automaton $\MX\xpatch$ (read: $\MX$ with $\MZ$ $w$-\emph{patched at $x$}) as $(X \cup Z, \rightarrow_P, x_\oplus, x_0)$,\footnote{Without loss of generality, the state sets $X$ and $Z$ are assumed to be disjoint.} where $\rightarrow_P$ is the smallest subset of $(X \cup Z) \times (\Sigma \cup \{\epsilon\}) \times (X \cup Z)$ satisfying the following rules, where $a \in \Sigma \cup \{ \epsilon \}$:
\begin{mathpar}
    \inferrule{%
        y \tr[\MX]{a} y'
    }{%
        y \tr[P]{a} y'
    }
    \and
    \inferrule{%
        y \tr[\MZ]{a} y'
    }{%
        y \tr[P]{a} y'
    }
    \and
    \inferrule{~}{%
        x \tr[P]{\epsilon} z_0
    }
    \and
    \inferrule{
        x \str[\MX]{w} x'
    }{%
        z_\oplus \tr[P]{\epsilon} x'
    }
\end{mathpar}
\end{definition}

Patching mimics hypothesis closure w.r.t.\ $H$ on automata, local to one state.

\begin{example}\label{ex:patching-intro}
As seen in \cref{ex:contraction-trcl}, an automaton construction was used in~\cite{kao} to realise the hypothesis $\la \leq \la\la$.
The hypothesis replaces any instance of the subword $\la\la$ with $\la$, and in automaton terms, any instance of two $\la$-transitions can also be traversed by a single $\la$-transition: a kind of ``transitive closure''.
As can be seen below, patching replicates this notion, up to $\epsilon$-removal:
\[
    \begin{tikzpicture}[baseline=(current bounding box.center)]
        \node (s1) {$\bullet$};
        \node[right=of s1] (s2) {$\bullet$};
        \node[right=of s2] (s3) {$\bullet$};
        \draw[->] (s1) edge node[above] {$\la$} (s2);
        \draw[->] (s2) edge node[above] {$\la$} (s3);
    \end{tikzpicture}
    \quad\quad
    \Rightarrow
    \quad\quad
    \begin{tikzpicture}[baseline=(current bounding box.center)]
        \node (s1) {$\bullet$};
        \node[right=of s1] (s2) {$\bullet$};
        \node[right=of s2] (s3) {$\bullet$};
        \node[above=6mm of s1] (s4) {$\bullet$};
        \node[above=6mm of s3] (s5) {$\bullet$};
        \draw[->] (s1) edge node[above] {$\la$} (s2);
        \draw[->] (s2) edge node[above] {$\la$} (s3);
        \draw[->] (s1) edge node[left] {$\epsilon$} (s4);
        \draw[->] (s4) edge node[above] {$\la$} (s5);
        \draw[->] (s5) edge node[right] {$\epsilon$} (s3);
    \end{tikzpicture}
\]
\end{example}

Just like the one-step language closure operation is iterated to define language closure, multiple patching operations may be necessary to achieve the desired closure --- either to the newly added states, or to states already in $\MX$.
The latter happens when, after patching one state, opportunities to patch other states become apparent, as shown in the example below.

\begin{example}\label{ex:multiple-patches}
Let $H = \{ \la \leq \lb\la \}$, and consider the expression $\lb\lb\la$.
Then, applying our patching construction one time, we obtain an automaton whose language is not yet closed: the output automaton needs to accept $\la$, as $\la \in \sem{\lb\lb\la}_{\la \leq \lb\la}$. So we patch again, obtaining the desired output:
\[\begin{tikzcd}
	\circ && \textcolor{rgb,255:red,214;green,92;blue,92}{\circ} &&& \circ && \bullet \\
	\textcolor{rgb,255:red,214;green,92;blue,92}{\bullet} && \textcolor{rgb,255:red,214;green,92;blue,92}{\bullet} & \textcolor{rgb,255:red,214;green,92;blue,92}{\bullet} && \bullet & \bullet & \bullet \\
	\textcolor{rgb,255:red,214;green,92;blue,92}{\bullet} && \bullet & \textcolor{rgb,255:red,214;green,92;blue,92}{\bullet} && \bullet & \bullet \\
	\textcolor{rgb,255:red,214;green,92;blue,92}{\oplus} && \oplus &&& \oplus
	\arrow["\lb"', from=1-1, to=2-1]
	\arrow["\lb"', color={rgb,255:red,214;green,92;blue,92}, from=1-3, to=2-3]
	\arrow["\epsilon", from=1-6, to=1-8]
	\arrow["\lb"', from=1-6, to=2-6]
	\arrow["\la", from=1-8, to=2-8]
	\arrow[""{name=0, anchor=center, inner sep=0}, "\lb"', color={rgb,255:red,214;green,92;blue,92}, from=2-1, to=3-1]
	\arrow["\epsilon", color={rgb,255:red,214;green,92;blue,92}, from=2-3, to=2-4]
	\arrow[""{name=1, anchor=center, inner sep=0}, "\lb"', from=2-3, to=3-3]
	\arrow[""{name=2, anchor=center, inner sep=0}, "\la", color={rgb,255:red,214;green,92;blue,92}, from=2-4, to=3-4]
	\arrow["\epsilon", from=2-6, to=2-7]
	\arrow[""{name=3, anchor=center, inner sep=0}, "\lb"', from=2-6, to=3-6]
	\arrow["\la"', from=2-7, to=3-7]
	\arrow["\epsilon", from=2-8, to=3-7]
	\arrow["\la"', color={rgb,255:red,214;green,92;blue,92}, from=3-1, to=4-1]
	\arrow["\la"', from=3-3, to=4-3]
	\arrow["\epsilon", from=3-4, to=4-3]
	\arrow["\lb"', from=3-6, to=4-6]
	\arrow["\epsilon", from=3-7, to=4-6]
	\arrow[between={0.4}{0.6}, Rightarrow, from=0, to=1]
	\arrow[between={0.4}{0.6}, Rightarrow, from=2, to=3]
\end{tikzcd}\]
For each step, the part to be patched next is \textcolor{q-red}{highlighted}.
Clearly, we can force patching to be necessary up to $n$ times if we like, by using $H = \{ \la \leq \lb\la\}$ on an automaton for the expression $\lb^n\la$. 
\end{example}

Of course, repeated patching opens up the possibility that we never stop patching, despite the fact that the constructed automaton must be finite to extract an expression.
Indeed, no correct automaton closure construction (in the sense that it outputs an automaton for the closed language) always preserves finite automata, since hypothesis closure does not preserve regularity (cf.\ \cref{ex:ba-ab-undec}).

While the construction we seek to define may need to iterate patching to reach a correct output, it should terminate where possible.
Therefore, it needs a criterion to judge if a state should be patched, and indeed if its work on the automaton as a whole is done.
To this end, we recall the \emph{Brzozowski derivative}, as well as its generalisation, which operates on languages.

\begin{definition}[Brzozowski derivative]\label{def:brz-derivs}
Let $L$ be a language, and $w \in \words$ a word.
The \emph{Brzozowski derivative} of $L$ with respect to $w$ is defined as follows:
\[ w\inv L := \{ u \in \words : wu \in L \}. \]
Let $L,M$ be languages. The \emph{generalised Brzozowski derivative} (also known as the \emph{left residual}) of $L$ with respect to $M$ is defined as follows:
\[ M\inv L := \bigcap_{w \in M} w\inv L = \{ v \in \words : wv \in L, \text{ for all } w \in M \}. \]
\end{definition}

The following characterisation can then be used to check whether an automaton requires any more patching --- that is, if the languages of each of the automaton's states are closed with respect to the hypothesis.

\begin{restatable}{lemma}{termCond}\label{lem:term-cond}
For any automaton $\MX$,
\[ 
    \forall x \in X,\ w\inv l_\MX(x) \sube \sem{e}\inv l_\MX(x)
    \text{ if and only if }
    \forall x \in X,\ l_\MX(x) = H(l_\MX(x)).
\]
\end{restatable}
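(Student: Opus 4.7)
The plan is to recast both sides as statements about accepting $\MX$-traces, and then relate them via standard trace composition/decomposition.

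First, I would unpack both conditions. Since $L \sube H(L)$ is automatic, the equality $l_\MX(x) = H(l_\MX(x))$ is equivalent to the implication: whenever $uwv \in l_\MX(x)$ and $s \in \sem{e}$, then $usv \in l_\MX(x)$. Likewise, $w\inv l_\MX(x) \sube \sem{e}\inv l_\MX(x)$ unpacks to: whenever $wv \in l_\MX(x)$ and $s \in \sem{e}$, then $sv \in l_\MX(x)$. Hence the right-hand condition is exactly the $u = \epsilon$ special case of the left-hand condition, applied uniformly across all states.

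With these reformulations in hand, the backward direction is immediate: instantiate $H$-closedness of each $l_\MX(x)$ at $u = \epsilon$ to obtain the derivative inclusion at every $x$.

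For the forward direction, I would fix $x \in X$, take $uwv \in l_\MX(x)$ and $s \in \sem{e}$, and aim to conclude $usv \in l_\MX(x)$. A routine factoring lemma for $\rightarrow^{*}$, proved by induction on the derivation of a trace, splits the accepting trace $x \str[\MX]{uwv} x_{\oplus}$ into $x \str[\MX]{u} x' \str[\MX]{w} x'' \str[\MX]{v} x_{\oplus}$ for some intermediate $x', x'' \in X$. The tail yields $wv \in l_\MX(x')$, so applying the hypothesis \emph{at $x'$} gives $v \in w\inv l_\MX(x') \sube \sem{e}\inv l_\MX(x')$, hence $sv \in l_\MX(x')$. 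Re-gluing with the prefix $x \str[\MX]{u} x'$ produces $x \str[\MX]{usv} x_{\oplus}$, and thus $usv \in l_\MX(x)$.

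The only real subtlety, and the reason both sides must quantify over \emph{all} states of $\MX$, is that the hypothesis has to be applied at the intermediate state $x'$ reached after reading $u$ rather than at $x$ itself; beyond this, the argument is routine trace bookkeeping together with the auxiliary factoring lemma.
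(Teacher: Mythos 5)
Your proof is correct and follows essentially the same route as the paper's: the backward direction instantiates $H$-closedness at $u=\epsilon$, and the forward direction factors an accepting trace for $uwv$ at the intermediate state $x'$ reached after $u$ and applies the derivative inclusion there, which is exactly why the quantifier over all states is needed. The only cosmetic difference is that you make the trace factoring/re-gluing lemma explicit where the paper leaves it implicit.
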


Note the location of the quantifiers: the language containment holds at \emph{all} states if and only if \emph{all} states are hypothesis closed.
This equivalence relates hypothesis-closure of the language of each state to a containment of one Brzozowski derivative in another.
Since regular languages are closed under (generalised) Brzozowski derivatives, the latter is computable.

We now define the first version of our construction on automata.
First we patch all states in the automaton as needed according to the criterion from \cref{lem:term-cond}.
This will imitate the action of $H$ at every state in $\MX$ at once.

\begin{definition}[$T_0$]\label{def:T0}
Write $X = \{ x_0, \dots, x_{n-1} \}$; we define $T_0(\MX) = \MX_n$, where
\begin{mathpar}
    \MX_0 = \MX
    \and
    \MX_{i+1} = 
        \begin{cases}
        \MX_i & w\inv l_\MX(x_i) \sube \sem{e}\inv l_\MX(x_i) \\
        \MX_i\patch{x_i} & \text{otherwise}
        \end{cases}
\end{mathpar}
\end{definition}

\begin{remark}\label{rem:well-defined-T0}
The languages used to check whether $T_0$ pastes onto a state $x_i$ are from $\MX$, not the intermediate automata $\MX_i$.
Thus all patches occur simultaneously, and the output of $T_0$ is independent of the order states are considered.
\end{remark}

One application of $T_0$ to $\MX$ introduces new states and grows the language of existing ones, potentially necessitating further patching.
Consequently, $T_0$ must be iterated (potentially infinitely often) until a hypothesis-closed automaton arises.
To rigorously define this process and formalise the notion of ``building'' the output of the construction, we introduce an order on automata.

\begin{definition}[Automaton order]\label{def:aut-order}
Let $\sqsube$ be the order on automata where
\[ (X, \rightarrow, x_\oplus, x_0) \sqsube (X', \rightarrow', x_\oplus', x_0') \text{ iff } X \sube X', {\rightarrow} \sube {\rightarrow'}, x_0 = x_0', x_{\oplus} = x_{\oplus}'. \]
\end{definition}
\noindent
Clearly, if $\MX \sqsube \MX'$, then $l_{\MX}(x) \sube l_{\MX'}(x)$ for every state $x$ in $\MX$.

When equipped with $\sqsube$, automata form an $\omega$-complete partial order, meaning any chain $\MX_0 \sqsube \MX_1 \sqsube \cdots$ must have a \emph{limit} $\MX_*$, which is the $\sqsube$-least automaton such that $\MX_i \sqsube \MX_*$ for each $i$.
This allows us to define the output of the construction as the limit of a chain --- with possibly infinitely many states.

\begin{definition}[$T_0^*$]\label{def:T0star}
 $T_0^{*}(\MX)$ is the limit of $\MX \sqsube T_0(\MX) \sqsube T_0^2(\MX) \sqsube \cdots$.
\end{definition}
\noindent
With standard constructions, $T_0^*$ lifts to a partial function on regular expressions.

\begin{definition}[Candidate partial reduction: $r_0$]\label{def:rZ}
  Given a regular expression $g$, let $\MX$ be its finite automaton.
  If $T_0^*(\MX)$ is finite, then $r_0(g)$ is the regular expression for the initial state of $T_0^{*}(\MX)$.
  Otherwise, it is undefined.
\end{definition}

For $r_0$ to be a partial reduction, \cref{def:reduction} tells us we need to show two things when it is defined on a regular expression $g$.
First, $\sem{g}_H$ should be contained in $\sem{r_0(g)}$; this boils down to showing that the $H$-closure of the language of an automaton $\MX$ is contained in the (plain) language of $T_0^*(\MX)$.

\begin{restatable}{lemma}{TZcorrect}\label{lem:T0correct}
For every $x \in X$, it holds that $H^{*}(l_\MX(x)) \sube l_{T_0^{*}(\MX)}(x)$.
\end{restatable}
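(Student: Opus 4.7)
My plan is to show that $l_{T_0^{*}(\MX)}(x)$ is $H$-closed and contains $l_\MX(x)$; the desired containment $H^{*}(l_\MX(x)) \sube l_{T_0^{*}(\MX)}(x)$ then follows at once from the minimality of $H^{*}$ as the smallest $H$-closed superset.

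The key step is a one-step version of the statement: for any automaton $\mathcal{A}$ and any state $y$ of $\mathcal{A}$, $H(l_\mathcal{A}(y)) \sube l_{T_0(\mathcal{A})}(y)$. I would prove this by taking $u w' v \in H(l_\mathcal{A}(y))$ with $u w v \in l_\mathcal{A}(y)$ and $w' \in \sem{e}$, fixing an accepting $\mathcal{A}$-trace for $u w v$ from $y$, and isolating intermediate states $x_1, x_2$ with $y \str[\mathcal{A}]{u} x_1 \str[\mathcal{A}]{w} x_2$ and $x_2$ accepting $v$. Then I would case-split on whether $T_0$ patches $x_1$: if yes, the fresh $\sem{e}$-automaton copy $\MZ$ grafted at $x_1$ offers a detour $x_1 \tr{\epsilon} z_0 \str{w'} z_\oplus \tr{\epsilon} x_2$ in $T_0(\mathcal{A})$, producing an accepting trace for $u w' v$; if no, the criterion $w\inv l_\mathcal{A}(x_1) \sube \sem{e}\inv l_\mathcal{A}(x_1)$ combined with $w v \in l_\mathcal{A}(x_1)$ forces $w' v \in l_\mathcal{A}(x_1)$ for every $w' \in \sem{e}$, so $u w' v$ is already accepted at $y$ in $\mathcal{A} \sqsube T_0(\mathcal{A})$.

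To close out, I would observe that $T_0^{*}(\MX)$ is the $\sqsube$-limit of $\MX \sqsube T_0(\MX) \sqsube T_0^{2}(\MX) \sqsube \cdots$, and that an accepting trace uses only finitely many transitions, so $l_{T_0^{*}(\MX)}(x) = \bigcup_{n} l_{T_0^{n}(\MX)}(x)$. Given $u w' v \in H(l_{T_0^{*}(\MX)}(x))$ originating from $u w v \in l_{T_0^{n}(\MX)}(x)$ for some $n$, the one-step lemma with $\mathcal{A} = T_0^{n}(\MX)$ places $u w' v$ in $l_{T_0^{n+1}(\MX)}(x) \sube l_{T_0^{*}(\MX)}(x)$, settling the $H$-closure; inclusion of $l_\MX(x)$ is immediate from $\MX \sqsube T_0^{*}(\MX)$.

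The main obstacle will be the non-patched branch of the one-step lemma: unpacking the generalised-derivative criterion precisely so that \emph{every} word $w'$ in $\sem{e}$, rather than some specific one, extends the prefix $u$ into an $\mathcal{A}$-accepting trace via $x_1$. Once this is handled, lifting one-step $H$-closure at each stage to $H$-closure at the limit is routine finitary bookkeeping on traces.
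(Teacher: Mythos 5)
Your proposal is correct and follows essentially the same route as the paper: prove the one-step containment $H(l_{\mathcal{A}}(y)) \subseteq l_{T_0(\mathcal{A})}(y)$ by dissecting an accepting trace for $uwv$ and either routing $uw'v$ through the freshly patched copy of the $\sem{e}$-automaton or falling back on the derivative criterion, then pass to the limit. Your explicit case-split on whether the intermediate state is patched, and your spelled-out argument that $l_{T_0^{*}(\MX)}(x) = \bigcup_n l_{T_0^{n}(\MX)}(x)$ so that $H$-closedness of the limit follows from the one-step lemma, only make precise steps the paper leaves implicit.
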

\begin{proof}[Proof sketch]
  One can more easily show the inclusion at one step: for every $x \in X$, $H(l_{\MX(x)}) \sube l_{T_0}(\MX)(x)$.
  This requires taking a word of the form $u w v \in l_{\MX}(x)$ and verifying that $u w' v \in l_{T_0(\MX)}(x)$ for any $w' \in \sem{e}$, which is simply a matter of applying the criterion in \cref{lem:term-cond} and tracing the patch as it is constructed.
\end{proof}

To validate the second part of \cref{def:reduction}, we must show that $r_0(g) \leqq_H g$.
We will achieve this using the systems of equations for the automata corresponding to these expressions, $\MX$ and $T_0^{*}(\MX)$.
To this end, we first extend the machinery around least solutions to $\KA_H$.

\begin{definition}\label{def:Hsoln}
An \emph{$H$-solution} to $\MX$ is a map $s : X \to \Exp$ satisfying the rules from \cref{def:soln-aut}, but with $\leqq_H$ instead of $\leqq$.
\end{definition}
\begin{restatable}{lemma}{solnAutExtended}\label{lem:solnAutExtended}
The (plain) least solution $s_\MX$ to $\MX$ from \cref{lem:least-soln-construct} is also an $H$-solution to $\MX$, and in fact it is \emph{least} among all $H$-solutions to $\MX$.
\end{restatable}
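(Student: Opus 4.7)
The plan is to split the statement into its two parts: that $s_\MX$ is an $H$-solution, and that it is least among $H$-solutions. The first part is immediate: because $\KA_H$ contains all axioms and rules of $\KA$, any derivation witnessing $e \leqq f$ also witnesses $e \leqq_H f$. Since $s_\MX$ satisfies each of the inequalities from \cref{def:soln-aut} via $\leqq$, it trivially satisfies them via $\leqq_H$, making it an $H$-solution.

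For the second part, let $s'$ be an arbitrary $H$-solution; I must show $s_\MX(x) \leqq_H s'(x)$ at every $x \in X$. The strategy is to re-use the argument behind \cref{lem:least-soln-construct}, which is purely algebraic: one writes the automaton as a matrix system $\vec{s} = M\vec{s} + \vec{b}$ (with $M$ encoding the transitions of $\MX$ and $\vec{b}$ flagging the final state), defines $s_\MX$ as $M^{*}\vec{b}$, and uses the fixpoint axioms of $\KA$ to conclude that any $\vec{s'}$ with $M\vec{s'} + \vec{b} \leqq \vec{s'}$ also satisfies $M^{*}\vec{b} \leqq \vec{s'}$. Every axiom and rule invoked in that derivation is already present in $\KA_H$, so the same chain of manipulations transfers verbatim with $\leqq$ replaced by $\leqq_H$ throughout. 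The $H$-solution $s'$ gives $M\vec{s'} + \vec{b} \leqq_H \vec{s'}$ by unfolding \cref{def:Hsoln}, and the transferred derivation then yields $s_\MX(x) \leqq_H s'(x)$ for each $x$, as required.

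The main obstacle I anticipate is conveying the transfer from $\leqq$ to $\leqq_H$ rigorously without reproducing the matrix construction in full. The cleanest remedy is to observe, once and for all, that the proof of \cref{lem:least-soln-construct} is \emph{parametric} in the underlying provability relation: it depends only on that relation being a preorder closed under the $\KA$ axioms (including the two fixpoint implications), a property equally enjoyed by $\leqq_H$. Making this parametricity explicit — for instance, by stating and proving \cref{lem:least-soln-construct} relative to any such relation and then instantiating to $\leqq$ and $\leqq_H$ — delivers both halves of the current lemma in one stroke and avoids any duplication of algebra.
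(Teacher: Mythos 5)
Your proposal is correct and follows essentially the same route as the paper: the first half is the observation that $\leqq$ implies $\leqq_H$, and the second half re-runs the leastness argument of \cref{lem:least-soln-construct} with $\leqq_H$ in place of $\leqq$, noting that the same map is produced because every step uses only axioms present in $\KA_H$. Your explicit remark that the construction is parametric in the provability relation is a slightly cleaner packaging of the same idea.
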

\begin{proof}[Proof sketch]
The first claim follows because $\leqq_H$ is weaker than $\leqq$.
One can then use the same construction used for \cref{lem:least-soln-construct} to obtain a least $H$-solution, but the expression constructed is the same as that obtained for the least solution.
\end{proof}

Of course, (least) $H$-solutions may not be (least) solutions in the sense of \cref{def:soln-aut}.
\cref{lem:solnAutExtended} provides leverage, relating $H$-solutions of an automaton to its (plain) least solution.
To use this, we first show that $T_0^{*}$ does not (up to $H$) perturb a solution of $\MX$ --- we only need to include the new states.

\begin{restatable}{lemma}{extendSolnTZ}\label{lem:extend-soln-T0}
Suppose $T_0^{*}(\MX)$ is finite, and $s$ is a solution to $\MX$.
We can construct an $H$-solution $s^{*}$ to $T_0^{*}(\MX)$, which moreover agrees with $s$ on all $x \in X$.
\end{restatable}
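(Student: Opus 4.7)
The plan is to induct on the patching stages. Since $T_0^{*}(\MX)$ is finite, the chain $\MX \sqsube T_0(\MX) \sqsube T_0^{2}(\MX) \sqsube \cdots$ must stabilize after some finite number of steps $N$: patching only ever adds transitions together with fresh states, so once the state set freezes, so does the transition relation. It therefore suffices to inductively construct a sequence of $H$-solutions $s^{*}_0, s^{*}_1, \ldots, s^{*}_N$ to the successive $T_0^{n}(\MX)$, each extending the previous, starting from $s^{*}_0 := s$ (an $H$-solution by \cref{lem:solnAutExtended}). The final $s^{*}_N$ is the desired $s^{*}$.

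For the inductive step, suppose $s^{*}_n$ is given. Each state $x$ newly patched at step $n{+}1$ contributes a fresh disjoint automaton $\MZ_x := (Z_x, \to_{\MZ_x}, z^{x}_{\oplus}, z^{x}_{0})$ accepting $\sem{e}$; let $s_{\MZ_x}$ be its plain least solution from \cref{lem:least-soln-construct}, so that $s_{\MZ_x}(z^{x}_{0}) \equiv e$ by \cref{thm:kozen}. Define
\[ f_x := \sum\{\, s^{*}_n(x') : z^{x}_{\oplus} \tr{\epsilon} x' \text{ in } T_0^{n+1}(\MX)\,\}, \]
set $s^{*}_{n+1}(z) := s_{\MZ_x}(z) \cdot f_x$ for each new $z \in Z_x$, and let $s^{*}_{n+1}$ agree with $s^{*}_n$ on pre-existing states. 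Because distinct patches occupy disjoint state sets, their axioms can be verified independently.

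Most axioms for $s^{*}_{n+1}$ are routine: axioms over old transitions are inherited from $s^{*}_n$; the axiom $1 \leqq s^{*}_{n+1}(x_{\oplus}) = s(x_{\oplus})$ comes from $s$; and transitions internal to some $\MZ_x$ give $a \cdot s_{\MZ_x}(y') \cdot f_x \leqq s_{\MZ_x}(y) \cdot f_x$ via solutionhood of $s_{\MZ_x}$ and right-multiplication. The $\epsilon$-transition $z^{x}_{\oplus} \tr{\epsilon} x'$ introduced by patching requires $s^{*}_{n+1}(x') \leqq_H s_{\MZ_x}(z^{x}_{\oplus}) \cdot f_x$, which follows from $1 \leqq s_{\MZ_x}(z^{x}_{\oplus})$ together with $s^{*}_n(x') \leqq f_x$ (since $x'$ is a summand of $f_x$).

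The main obstacle is the remaining axiom, namely $s^{*}_{n+1}(z^{x}_{0}) = s_{\MZ_x}(z^{x}_{0}) \cdot f_x \leqq_H s^{*}_n(x)$. Using $s_{\MZ_x}(z^{x}_{0}) \equiv e$ and the hypothesis $e \leq w \in H$, this reduces to the purely algebraic inequality $w \cdot f_x \leqq_H s^{*}_n(x)$, which in turn reduces via distributivity and idempotence of $+$ to showing $w \cdot s^{*}_n(x') \leqq_H s^{*}_n(x)$ for every summand. This is where the trace structure enters: each such $x'$ arises because $x \str{w} x'$ in the automaton being patched, so the required inequality follows from a standard auxiliary fact (proved by induction on the length of the trace using the transition axioms): for any $H$-solution $t$ on an automaton $\MY$ and any trace $y \str[\MY]{u} y'$, we have $u \cdot t(y') \leqq_H t(y)$. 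This closes the inductive step and hence the lemma.
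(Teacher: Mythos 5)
Your proposal is correct and follows essentially the same route as the paper's proof: the same definition of the extended solution ($s_{\MZ_x}(z)\cdot f_x$ on patched states, with $f_x$ the sum of $s$ over the $w$-successors, identical to the paper's $r_i$), the same case analysis over the patching rules, and the same key step reducing the initial-state $\epsilon$-transition to $e \leqq_H w$ plus the trace-induction fact $u \cdot t(y') \leqq_H t(y)$. The only cosmetic difference is that you phrase the iteration as an explicit induction on stages with intermediate solutions $s^*_n$, where the paper proves the one-step case and re-applies it.
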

\begin{proof}[Proof sketch]
Since $T_0^{*}(\MX)$ is finite, it is $T_0$ applied to $\MX$ finitely many times.
We prove the claim just for $T_0(\MX)$; the main claim then follows by induction.
Let $s_{\MZ}$ be the least solution to $\MZ$.
$T_0(\MX)$ has $\MZ_1, \ldots \MZ_n$ (all copies of $\MZ$) with state sets $Z_1, \dots, Z_n$ $w$-patched on at states $x_1, \ldots x_n$.
One defines $s^{*}$ as follows.
\[
s^{*}(y) = \begin{cases}
     s(y) & y \in X \\
     s_{\MZ}(y) \cdot r_i & y \in Z_i
\end{cases}
\quad\quad\text{where}\quad\quad
r_i := \sum_{x_i \str{w} x' }s(x')
\]
For a state in $\MX$, $s^*$ just looks at $s$; for a state in $\MZ_i$, it uses $s_{\MZ}$ while accounting for the fact that we need to re-enter $\MX$, by composing with $r_i$.

The majority of the equations required for $s^{*}$ to be an $H$-solution to $T_0^{*}(\MX)$ (i.e., the ones resulting from the first two rules in \cref{def:patching}) follow by definition.
The only ones that remain correspond to $\epsilon$-transitions out to, and back from, the patched on automata (the last two rules in \cref{def:patching}).
We first look at equations resulting from the fourth rule, where an $\epsilon$-transition from some $\MZ_i$ goes back into $\MX$.
Here, we must prove that $s^{*}(x') \leqq s^{*}(z_{\oplus}^{i})$ where $x'$ is one of the states where $\MZ_i$ transitions back to $\MX$.
This is proved as follows:
\[ s^{*}(x') \equiv 1 \cdot s^{*}(x') \leqq s_{\MZ}(z_{\oplus}^i) \cdot s^{*}(x') \leqq s_{\MZ}(z_{\oplus}^i) \cdot r_i = s^{*}(z_{\oplus}^i) \]
For the second-to-last step, we use the definition of $r_i$ to observe that $s^{*}(x') = s(x') \leqq r_i$ because $x_i \str{w} x'$.
All other steps follow by the axioms of $\KA$, the fact that $z_{\oplus}^i$ is the accepting state of $\MZ_i$ with $s_\MZ$ as solution, and the definition of $s^{*}$.

Lastly, each equation corresponding to an $\epsilon$-transition from $\MX$ out to some $\MZ_i$ is of the form $s^{*}(z_{0}^i) \leqq s^{*}(x_i)$, which is proved as follows:
\begin{equation}\label{eq:extend-soln-change-next}
  s^{*}(z_0^i) \equiv s_{\MZ}(z_0^i) \cdot r_i \equiv e \cdot r_i \leqq_H w \cdot r_i \leqq s(x_i) = s^{*}(x_i).
\end{equation}
For the second-to-last step, i.e., $w \cdot r_i \leqq s(x_i)$, it suffices to show that for all $x'$ such that $x_i \str{w} x'$, $w \cdot s(x') \leqq s(x_i)$.
This follows directly from the definition of solution, applied in an induction on the construction of $x_i \str{w} x'$.

All other steps in~\eqref{eq:extend-soln-change-next} follow by the definition of $s^{*}$, the definition of $\MZ_i$ as representing $e$, and the (only) hypothesis in $H$. 
The use of $H$ in the middle is the reason that $s^{*}$ is an $H$-solution of $T_0(\MX)$, not necessarily a plain solution.
\end{proof}

Since the least solution of $\MX$ is part of an $H$-solution for $T_0^{*}(\MX)$, it is above the least solution for $T_0^{*}(\MX)$, up to $H$.
Because $r_0(g)$ is created from the least solution (and thus the least $H$-solution) to $T_0^*(\MX)$, we can relate $g$ to $r_0(g)$.

\begin{theorem}\label{thm:partialRedTZ}
$r_0$ is a partial reduction.
\end{theorem}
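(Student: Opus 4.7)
The plan is to verify the two conditions of \cref{def:reduction} for $r_0$ whenever it is defined on a regular expression $g$. Let $\MX$ be a finite automaton for $g$ obtained via the standard conversion, so that $\sem{g} = l_\MX(x_0)$ and, if $r_0(g)$ is defined, $\sem{r_0(g)} = l_{T_0^*(\MX)}(x_0)$ by construction.

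The first condition, $\sem{g}_H \subseteq \sem{r_0(g)}$, follows almost immediately from \cref{lem:T0correct}. Instantiating that lemma at $x = x_0$ gives $H^*(l_\MX(x_0)) \subseteq l_{T_0^*(\MX)}(x_0)$, which unfolds to exactly $\sem{g}_H \subseteq \sem{r_0(g)}$ by \cref{def:hyp-cl-lang} and the identification of languages above.

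The second condition, $r_0(g) \leqq_H g$, is where the solution machinery does the real work. Let $s_\MX$ be the least solution to $\MX$ from \cref{lem:least-soln-construct}; the standard conversion ensures $s_\MX(x_0) \equiv g$. Applying \cref{lem:extend-soln-T0}, we extend $s_\MX$ to an $H$-solution $s^*$ of $T_0^*(\MX)$ that agrees with $s_\MX$ on $X$, so in particular $s^*(x_0) \equiv g$. Now let $t$ be the least (plain) solution to the finite automaton $T_0^*(\MX)$ from \cref{lem:least-soln-construct}; by \cref{lem:solnAutExtended}, $t$ is also the least $H$-solution to $T_0^*(\MX)$, and again by the standard conversion $t(x_0) \equiv r_0(g)$. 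Leastness among $H$-solutions then yields $t(x_0) \leqq_H s^*(x_0)$, which translates to $r_0(g) \leqq_H g$.

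The main obstacle is not in this proof itself but in the lemmas it invokes: the delicate work sits in \cref{lem:extend-soln-T0}, which had to combine solutions across infinitely many patching iterations while respecting the single use of the hypothesis $e \leq w$ for each patched automaton. Assuming these lemmas, the argument here reduces to a clean sandwich between the least solution of $\MX$ (which gives back $g$) and the least $H$-solution of $T_0^*(\MX)$ (which gives $r_0(g)$), with \cref{lem:extend-soln-T0} providing the bridge between them and \cref{lem:solnAutExtended} ensuring that the same expression computed by \cref{lem:least-soln-construct} serves as the least $H$-solution.
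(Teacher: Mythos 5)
Your proof is correct and follows essentially the same route as the paper: the first reduction condition via \cref{lem:T0correct}, and the second by sandwiching the least ($H$-)solution of $T_0^*(\MX)$ under the $H$-solution extending the least solution of $\MX$ from \cref{lem:extend-soln-T0}, with \cref{lem:solnAutExtended} justifying that leastness carries over to $H$-solutions. The only cosmetic difference is that the paper cites \cref{thm:kozen} explicitly for $s(x_0) \equiv g$ where you appeal to the standard conversion; both amount to the same thing.
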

\begin{proof}
  We show that for every $g \in \Exp$ on which $r_0$ is defined, we have $\sem{g}_{H} \sube \sem{r_0(g)}$ and $r_0(g) \leqq_H g$.
  The former follows from \cref{lem:T0correct}:
  \[ \sem{g}_{H} = H^{*}(L_{\MX}) \sube L_{T_0^{*}(\MX)}  = \sem{r_0(g)} \]
  For the latter, let $s$ be the least solution to $\MX$, the automaton representing $g$; by \Cref{thm:kozen}, $s(x_0) \equiv g$.
  Let $s'$ be the least solution to $T_0^{*}(\MX)$; by definition, $r_0(g) = s'(x_0)$.
  \cref{lem:extend-soln-T0} extends $s$ to an $H$-solution $s^*$ for $T_0^{*}(\MX)$.
  Since $s'$ is the \emph{least} solution to $T_0^{*}(\MX)$, it is a least $H$-solution, allowing us to derive:
  \[ r_0(g) = s'(x_0) \leqq_{H} s^*(x_0) = s(x_0) \equiv g \qedhere \]
\end{proof}

With our partial reduction in hand, we can now conclude decidability and completeness for the cases where it is defined, by way of \cref{lem:reduction-complete}.

\begin{corollary}\label{cor:partcompTZ}
  Let $g,h$ be expressions for which $r_0$ is defined.
  If $\sem{g}_H = \sem{h}_H$, then $g \equiv_H h$; moreover, the latter is decidable.
  Thus, if $r_0$ is known to output finite automata for all expressions in $\Exp$, then $\KA_H$ is complete and decidable.
\end{corollary}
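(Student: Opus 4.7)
The plan is to derive this corollary as a direct consequence of \cref{thm:partialRedTZ} combined with \cref{lem:reduction-complete}, so the work is essentially in invoking machinery already set up rather than proving anything substantially new. I would first note that by \cref{thm:partialRedTZ}, $r_0$ is a partial reduction for $H$ in the sense of \cref{def:reduction}. Then I would apply \cref{lem:reduction-complete} to this partial reduction: since $g$ and $h$ are both in the domain of $r_0$, the hypothesis $\sem{g}_H = \sem{h}_H$ yields $g \equiv_H h$, and decidability of $g \equiv_H h$ follows from the same lemma (which reduces it to deciding $\sem{r_0(g)} = \sem{r_0(h)}$, i.e., equivalence of regular expressions).

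For the second sentence, I would observe that a total function whose restriction is a partial reduction is itself a total reduction: the conditions $\sem{e}_H \subseteq \sem{r_0(e)}$ and $r_0(e) \leqq_H e$ from \cref{def:reduction} are exactly the ones already verified in the proof of \cref{thm:partialRedTZ}, and we merely observe that when the domain is all of $\Exp$ the map qualifies as total. Applying \cref{lem:reduction-complete} again, but now with the totality clause, yields completeness and decidability of $\KA_H$ on all expressions.

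There is no real obstacle here — the corollary is a straightforward packaging of \cref{thm:partialRedTZ} and \cref{lem:reduction-complete}, with the only mild subtlety being to state carefully that decidability follows from the algorithmic content of \cref{lem:reduction-complete} (which turns the question $g \equiv_H h$ into an equivalence check of the reduced regular expressions via Kleene's theorem and Kozen's theorem). I would keep the proof to a few lines, since unpacking any further would simply reproduce the argument from \cref{lem:reduction-complete}.
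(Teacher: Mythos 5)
Your proposal is correct and matches the paper's (implicit) argument exactly: the corollary is obtained by combining \cref{thm:partialRedTZ} (that $r_0$ is a partial reduction) with \cref{lem:reduction-complete}, including its totality clause for the final sentence. The paper treats this as immediate and gives no separate proof, so your packaging is precisely what is intended.
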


In the next section we will improve on $T_0$, so that its output is finite in more situations.
This will allow for wider domains of reduction for some hypotheses.
However, we can already apply $T_0^{*}$ to recover some known results.

For contraction hypotheses $\la \leq \la\la$ (so-named in~\cite{ckao}), $T_0^{*}$ always outputs finitely.
As such, the reduction given in op.\ cit., and the completeness achieved as a result, are recoverable using this method.
In~\cite[Lemma~4.35]{ckao}, linear hypotheses (called ``grounded'' there) are also shown to lift to a concurrent setting.
Finite output generalises to other ``contraction hypotheses'' $\la \leq w$, as $T_0$ emulates the ``descendent construction'' described in the proof of~\cite[Theorem~4.1.2]{descendent}.

While $T_0^{*}$ only operates on one hypothesis at a time, we can still make use of it for ``independent'' sets of hypotheses, where closure with respect to all of them at once is the same as closure w.r.t.\ each hypothesis individually, in some order~\cite{ckao,tools}.
Then we can use $T_0^{*}$ to construct a reduction from $\KA_H$ down to $\KA$, removing one hypothesis at a time.
This is easily shown to be the case for a set of contraction hypotheses, for example $H := \{ \la \leq \la\la,\ \lb \leq \lb\lb,\ \ldots \}$.

\section{Improved Partial Reduction: Saturation}\label{sec:saturation}

\cref{cor:partcompTZ} is predicated on the domain of $r_0$, and by extension on whether $T_0^*$ outputs a finite automaton when fed the automata for $g$ and $h$, but does not tell us when this might be the case.
While we cannot expect $T_0^{*}$ to \emph{always} produce a finite automaton (per \cref{ex:ba-ab-undec}), we wish for it to do so \emph{when possible} --- i.e., when the hypothesis closure of the language in question is regular.
Unfortunately, there are cases where $T_0^*$ unnecessarily outputs an infinite automaton.

\begin{example}\label{ex:sat-motiv}
Let $H := \{ \lb\la \leq \la \}$, and consider the expression $\la$.
We can easily calculate that the desired output expression is $\lb^{*}\la$, but $T_0^{*}$ proceeds infinitely:
\[\begin{tikzcd}[sep=scriptsize]
	\textcolor{rgb,255:red,214;green,92;blue,92}{\circ} & \textcolor{rgb,255:red,214;green,92;blue,92}{\oplus} && \circ & \oplus & {} && \circ & \oplus \\
	& {} && \bullet & {{\color{q-red} \bullet}} & \textcolor{rgb,255:red,214;green,92;blue,92}{\bullet} && \bullet & \bullet & \bullet \\
	&&&&&&&& \bullet & {{\color{q-red} \bullet}} & {{\color{q-red} \bullet}}
	\arrow["\la", color={rgb,255:red,214;green,92;blue,92}, from=1-1, to=1-2]
	\arrow[""{name=0, anchor=center, inner sep=0}, draw=none, from=1-2, to=2-2]
	\arrow["\la", from=1-4, to=1-5]
	\arrow[""{name=1, anchor=center, inner sep=0}, "\epsilon"', from=1-4, to=2-4]
	\arrow[""{name=2, anchor=center, inner sep=0}, draw=none, from=1-6, to=2-6]
	\arrow["\la", from=1-8, to=1-9]
	\arrow[""{name=3, anchor=center, inner sep=0}, "\epsilon"', from=1-8, to=2-8]
	\arrow["\lb"', from=2-4, to=2-5]
	\arrow["\la"', color={rgb,255:red,214;green,92;blue,92}, from=2-5, to=2-6]
	\arrow["\epsilon"', from=2-6, to=1-5]
	\arrow["\lb"', from=2-8, to=2-9]
	\arrow["\la"', from=2-9, to=2-10]
	\arrow["\epsilon"', from=2-9, to=3-9]
	\arrow["\epsilon"', from=2-10, to=1-9]
	\arrow["\lb"', from=3-9, to=3-10]
	\arrow["\la"', color={rgb,255:red,214;green,92;blue,92}, from=3-10, to=3-11]
	\arrow["\epsilon"', from=3-11, to=2-10]
	\arrow["{T_0}", between={0.4}{0.6}, Rightarrow, from=0, to=1]
	\arrow["{T_0}", between={0.4}{0.6}, Rightarrow, from=2, to=3]
\end{tikzcd}\]
Since each step adds an $\la$-transition that needs patching, $T_0^{*}$ will be infinite.

There is also a symmetric counterexample: with $H := \{ \la\lb \leq \la \}$ applied to $\{\la\}$, the output should be $\la\lb^{*}$ but $T_0^{*}$ will again produce an infinite automaton.
\end{example}

To deal with these cases we will improve the construction.%
\footnote{%
    The reader may think to fix this by first ``saturating'' the hypothesis itself, to skip redundant steps; in the above case, obtaining $\lb^{*}\la \leq \la$.
    While this works in principle, it is circular: to calculate a hypothesis closed expression, we construct the automaton; to do that, we calculate a hypothesis closed expression.
    Instead, we seek to create a similar effect on the automaton level, where we can be certain it will terminate.
}
Examining the target expression $\lb^{*}\la$, our output should have a $\lb$-loop. 
Indeed, the (sub)word to be expanded, $\la$, appears in in the expansion $\lb\la$; on the automaton level, there is a state with $\la$ in its language. 
So, rather than starting a new patch, we reset the current one to its initial state.
We call this an \emph{initial state reset}.
\[\begin{tikzcd}[row sep=scriptsize]
	\circ && \oplus && \circ && \oplus \\
	\bullet & \bullet & \bullet && \bullet & \bullet & \bullet
	\arrow["\la", from=1-1, to=1-3]
	\arrow["\epsilon"', from=1-1, to=2-1]
	\arrow["\la", from=1-5, to=1-7]
	\arrow[""{name=0, anchor=center, inner sep=0}, "\epsilon"', from=1-5, to=2-5]
	\arrow["\lb"', from=2-1, to=2-2]
	\arrow["\la"', from=2-2, to=2-3]
	\arrow[""{name=1, anchor=center, inner sep=0}, "\epsilon"', from=2-3, to=1-3]
	\arrow["\lb"', from=2-5, to=2-6]
	\arrow["\epsilon"', curve={height=6pt}, dashed, from=2-6, to=2-5]
	\arrow["\la"', from=2-6, to=2-7]
	\arrow["\epsilon"', from=2-7, to=1-7]
	\arrow[between={0.4}{0.6}, Rightarrow, from=1, to=0]
\end{tikzcd}\]
Using this new automaton, the automaton in \cref{ex:sat-motiv} needs only one patch.

For the symmetric hypothesis $\la\lb \leq \la$, we apply a similar tactic --- in this case we search for states that have an $\la$ path \emph{from the initial state}.
For such states introduce a \emph{final state reset}, back to that state from the final state.
\[\begin{tikzcd}[row sep=scriptsize]
	\circ && \oplus && \circ && \oplus \\
	\bullet & \bullet & \bullet && \bullet & \bullet & \bullet
	\arrow["\la", from=1-1, to=1-3]
	\arrow["\epsilon"', from=1-1, to=2-1]
	\arrow["\la", from=1-5, to=1-7]
	\arrow[""{name=0, anchor=center, inner sep=0}, "\epsilon"', from=1-5, to=2-5]
	\arrow["\la"', from=2-1, to=2-2]
	\arrow["\lb"', from=2-2, to=2-3]
	\arrow[""{name=1, anchor=center, inner sep=0}, "\epsilon"', from=2-3, to=1-3]
	\arrow["\la"', from=2-5, to=2-6]
	\arrow["\lb"', from=2-6, to=2-7]
	\arrow["\epsilon"', from=2-7, to=1-7]
	\arrow["\epsilon"', curve={height=6pt}, dashed, from=2-7, to=2-6]
	\arrow[between={0.4}{0.6}, Rightarrow, from=1, to=0]
\end{tikzcd}\]
Both initial and final state resets prevent the need to patch a patched on automaton $\MZ$, when that work could have been done before even attaching $\MZ$ to $\MX$.
We now define the general case of adding these transitions, then use it in place of the automaton for $e$ when patching.

\begin{definition}[Saturated automaton]\label{def:sat-aut}
  Let $\MZ = (Z, \rightarrow, z_\oplus, z_0)$ be an automaton, and $w$ a word.
  The $w$-\emph{saturation} of $\MZ$, written $\MZ^w$, is $(Z, \to_S, z_\oplus, z_0)$, where $\to_S$ is the smallest subset of $Z \times (\Sigma \times \{\epsilon\}) \times Z$ satisfying the rules:
\begin{mathpar}
    \inferrule{%
        z \tr{a} z'
    }{%
        z \tr{a}_S z'
    } 
    \and
    \inferrule{%
        z \str[S]{w} z_\oplus
    }{%
        z \eptr_S z_0
    } 
    \and
    \inferrule{%
        z_0 \str[S]{w} z
    }{%
        z_\oplus \eptr_S z
    } 
\end{mathpar}
The latter rules correspond to initial and final state resets respectively.
\end{definition}

Saturation adds only transitions, so it will preserve finiteness of automata.
Note that the transition relation of $\MZ^w$ is well-defined, as adding a new $\epsilon$-transition does not prevent any other $\epsilon$-transition from being added.

\begin{example}\label{ex:sat}
We compute the $\la$-saturation of the automaton for $\la\lb + \lb\la$.
\[\begin{tikzcd}
	\circ & \bullet_2 &&& \circ & \bullet_2 \\
	\bullet_1 & \oplus &&& \bullet_{1} & \oplus
	\arrow["\la", from=1-1, to=1-2]
	\arrow["\lb"', from=1-1, to=2-1]
	\arrow[""{name=0, anchor=center, inner sep=0}, "\lb", from=1-2, to=2-2]
	\arrow["\la", from=1-5, to=1-6]
	\arrow[""{name=1, anchor=center, inner sep=0}, "\lb"', from=1-5, to=2-5]
	\arrow["\lb"', from=1-6, to=2-6]
	\arrow["\la"', from=2-1, to=2-2]
	\arrow["\epsilon"', color={rgb,255:red,214;green,92;blue,92}, curve={height=6pt}, from=2-5, to=1-5]
	\arrow["\la"', from=2-5, to=2-6]
	\arrow["\epsilon"', color={rgb,255:red,214;green,92;blue,92}, curve={height=6pt}, from=2-6, to=1-6]
	\arrow["{\text{($\la$-saturation)}}", between={0.4}{0.6}, Rightarrow, from=0, to=1]
\end{tikzcd}\]
The left \textcolor{q-red}{$\eptr$} is added because $\bullet_1 \tr{\la} \oplus$ in the original automaton; it is an initial state reset.
The right \textcolor{q-red}{$\eptr$} is added because $\circ \tr{\la} \bullet_{2}$; it is a final state reset.
\end{example}

Incorporating the saturated automaton into the existing construction gives a revised version of $T_0$, which patches using $\MZ^w$ in place of $\MZ$.

\begin{definition}[$T_H$]\label{def:TH}
Write $X = \{ x_0, \dots, x_{n-1} \}$; we define $T_H(\MX) = \MX_n$, with
\begin{mathpar}
    \MX_0 = \MX
    \and
    \MX_{i+1} = 
        \begin{cases}
        \MX_i & \quad w\inv l_\MX(x_i) \sube \sem{e}\inv l_\MX(x_i) \\
        \MX_i\satpat{x_i} & \quad \text{otherwise}
        \end{cases}
\end{mathpar}
Finally, $T_H^{*}(\MX)$ is the limit of the chain $\MX \sqsube T_H(\MX) \sqsube T_H^2(\MX) \sqsube \ldots$.
\end{definition}

\begin{definition}[Candidate partial reduction: $r_H$]\label{def:rH}
  Given a regular expression $g$, let $\MX$ be its finite automaton.
  If $T_H^*(\MX)$ is finite, then $r_H(g)$ is the regular expression for the initial state of $T_H^{*}(\MX)$.
  Otherwise, it is undefined.
\end{definition}

To show that $r_H$ is a partial reduction, we first show $\sem{g}_H \sube \sem{r_H(g)}$:

\begin{lemma}\label{lem:THcorrect}
For every $x \in X$, it holds that $H^{*}(l_\MX(x)) \sube l_{T_H^{*}(\MX)}(x)$.
\end{lemma}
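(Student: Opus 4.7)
The plan is to mimic the proof sketch for \cref{lem:T0correct}, with one substantive change to accommodate saturation. The crucial observation is that $\MZ^w$ is obtained from $\MZ$ by adding only $\epsilon$-transitions (per \cref{def:sat-aut}), so $\MZ \sqsube \MZ^w$ and in particular $\sem{e} = L_\MZ \sube L_{\MZ^w}$. Hence the saturated automaton still accepts every word we need it to when we route a witness for hypothesis closure through a patch; saturation never loses a usable path.

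First I would prove the one-step version: for any automaton $\MY$ and state $y \in Y$, we have $H(l_\MY(y)) \sube l_{T_H(\MY)}(y)$. Given $uw'v \in H(l_\MY(y))$ witnessed by $uwv \in l_\MY(y)$ and $w' \in \sem{e}$, fix an accepting trace $y \str{u} y' \str{w} y'' \str{v} y_\oplus$ in $\MY$. There are two cases depending on the criterion $w\inv l_\MY(y') \sube \sem{e}\inv l_\MY(y')$ of \cref{def:TH} at $y'$. If it fails, then $T_H(\MY)$ contains $\epsilon$-transitions from $y'$ into the initial state of a fresh copy of $\MZ^w$, and from its final state back to every $\MY$-successor of $y'$ under $w$, including $y''$; since $w' \in \sem{e} \sube L_{\MZ^w}$, the word $w'$ has an accepting trace through this copy of $\MZ^w$, which then connects to $y''$ and on to $y_\oplus$, yielding a $T_H(\MY)$-trace for $uw'v$ from $y$. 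If instead the criterion holds at $y'$, then $wv \in l_\MY(y')$ gives $v \in w\inv l_\MY(y') \sube \sem{e}\inv l_\MY(y')$, so $w'v \in l_\MY(y')$ and hence $uw'v \in l_\MY(y) \sube l_{T_H(\MY)}(y)$ directly.

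Next I would iterate. Using monotonicity of $H$ on languages together with the one-step inclusion applied to $\MY := T_H^k(\MX)$, a straightforward induction on $k$ yields $H^k(l_\MX(x)) \sube l_{T_H^k(\MX)}(x)$ for every $k$: the inductive step is $H^{k+1}(l_\MX(x)) \sube H(l_{T_H^k(\MX)}(x)) \sube l_{T_H^{k+1}(\MX)}(x)$. Since $T_H^{*}(\MX)$ is the $\sqsube$-limit of the chain $\MX \sqsube T_H(\MX) \sqsube T_H^2(\MX) \sqsube \cdots$, the language of $x$ in the limit is the union of its languages along the chain, and taking unions over $k$ gives $H^{*}(l_\MX(x)) \sube l_{T_H^{*}(\MX)}(x)$.

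The main obstacle I expect is the bookkeeping in the one-step case analysis: in particular, tracking that the patched-on copy of $\MZ^w$ corresponds to the correct source state $y'$ and that its exit $\epsilon$-transitions land on exactly the $\MY$-successors of $y'$ under $w$, and verifying that the initial/final state reset transitions introduced by saturation do not disturb the path being constructed (they can only add acceptance routes, never remove them). Beyond that, the proof is essentially a direct adaptation of the $T_0$ argument from \cref{lem:T0correct}, since the only substantive difference is the use of $\MZ^w$ in place of $\MZ$, and we only exploit the inclusion $L_\MZ \sube L_{\MZ^w}$.
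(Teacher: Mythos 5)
Your proof is correct, but it takes a different (and more self-contained) route than the paper. The paper's proof of \cref{lem:THcorrect} does not redo the trace construction at all: it observes that the patching criterion in \cref{def:TH} is identical to that of \cref{def:T0}, so the same states are patched, only with $\MZ^w$ in place of $\MZ$; since $\MZ \sqsube \MZ^w$, this gives $T_0(\MX) \sqsube T_H(\MX)$, hence $l_{T_0(\MX)}(x) \sube l_{T_H(\MX)}(x)$, and the one-step inclusion for $T_H$ follows from the one already proved for $T_0$ in \cref{lem:T0correct} by transitivity. You instead re-prove the one-step inclusion $H(l_\MY(y)) \sube l_{T_H(\MY)}(y)$ from scratch, splitting on whether the criterion holds at the intermediate state and routing the witness through the patched-on copy of $\MZ^w$ via $\sem{e} \sube L_{\MZ^w}$. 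What your version buys is independence from \cref{lem:T0correct} and an explicit treatment of the iteration and limit steps ($H^{k}(l_\MX(x)) \sube l_{T_H^{k}(\MX)}(x)$ by induction, then unions over $k$), which the paper compresses into ``follows by transitivity''; your case split on the criterion is also slightly cleaner than the paper's contrapositive argument in the appendix proof of \cref{lem:T0correct}. What the paper's version buys is brevity: one monotonicity observation replaces the entire trace-surgery argument. Both are sound; your concern about the reset transitions is correctly resolved by noting they only add transitions and thus cannot destroy any accepting trace.
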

\begin{proof}
The proof follows fairly directly from the corresponding proof for $T_0$ (\cref{lem:T0correct}).
There, we proved that $H(l_\MX(x)) \sube l_{T_0(\MX)}(x)$.
Since $\MZ \sqsubseteq \MZ^w$, we have $T_0(\MX) \sqsube T_H(\MX)$; therefore, for any $x \in X$, it follows that $l_{T_0(\MX)}(x) \sube l_{T_H(\MX)}(x)$.
Thus, the desired inclusion follows by transitivity.
\end{proof}

Now we verify that $r_H(g) \leqq_H g$.
Similarly to $T_0$, we leverage completeness of $\KA$ (see \cref{thm:kozen}).
However, the proof for $T_0$ required that the least solution to the patched on $\MZ$ at the initial state $z_0$ is equivalent to $e$.
In contrast, for $T_H$, the patched on automaton is $w$-saturated, so its least solution is generally above $e$.
This makes sense: the automaton was $w$-saturated so that it would accept more words, so its least solution \emph{should} grow w.r.t.\ $\leqq$.
However, it should \emph{not} change w.r.t.\ $\leqq_H$, which is exactly what we will show: though the least solution at the initial state is not equivalent to $e$, it is equivalent to $e$ \emph{up to} $H$.

The proof requires that we briefly introduce the \emph{reverse} operation on words, languages, etc.
We write $w^r$ for the reverse of a word $w$; this lifts to languages: $L^r = \{ w^r : w \in L \}$.
The reverse $g^r$ of an expression $g$ sends $g_1 \cdot g_2$ to $g_2^r \cdot g_1^r$, and acts homomorphically on the other operators, so $\sem{g^r} = \sem{g}^r$.
The laws of $\KA$ are unperturbed under reversal, so $\KA \vdash f = g$ iff $\KA \vdash f^r = g^r$.

The \emph{reverse automaton} $\MX^r$ refers to $\MX$ with all transitions reversed, and initial and final states swapped; clearly, $L_\MX^r = L_{\MX^r}$.
We can then speak of \emph{reverse solutions}, i.e., solutions to the reversed automaton $\MX^r$.
There is a tight relationship between solutions and reverse solutions at the initial and final states.

\begin{lemma}\label{lem:rev-init-soln}
If $s$ and $s^r$ are respectively the least and least reverse solution to some automaton $\MZ$, then $s^r(z_\oplus) \equiv {s(z_0)}^r$
\end{lemma}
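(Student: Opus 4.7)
The plan is to use completeness of $\KA$ (\cref{thm:kozen}) to reduce the claim to an equality of languages, and then exploit the well-behaved interaction of the reverse operation with the semantics $\sem{-}$ and with the automaton/solution correspondence. Since every step relies on already-established facts, the proof should be short, with the main task being to line up the definitions correctly.

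First, I would invoke \cref{lem:soln-language} (combined with \cref{lem:least-soln-construct}): because $s$ is the \emph{least} solution to $\MZ$, we have $\sem{s(z_0)} = l_\MZ(z_0) = L_\MZ$. Analogously, since $s^r$ is the least solution to $\MZ^r$, which has $z_\oplus$ as its initial state, we get $\sem{s^r(z_\oplus)} = l_{\MZ^r}(z_\oplus) = L_{\MZ^r}$.

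Next, I would use the two observations recorded just before the lemma's statement: the reverse operation on expressions is compatible with the semantics, in the sense that $\sem{g^r} = \sem{g}^r$ for every $g \in \Exp$; and reversal of automata commutes with the language map, $L_{\MZ^r} = L_\MZ^r$. Chaining these gives
\[
    \sem{s^r(z_\oplus)} \;=\; L_{\MZ^r} \;=\; L_\MZ^r \;=\; \sem{s(z_0)}^r \;=\; \sem{s(z_0)^r}.
\]
Finally, applying completeness of $\KA$ (\cref{thm:kozen}) to this semantic equality yields $s^r(z_\oplus) \equiv s(z_0)^r$, which is the desired statement.

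There is no real obstacle here beyond bookkeeping: all three ingredients (least solutions compute state languages, reverse commutes with language semantics and with $\sem{-}$, and $\KA$ is complete) have been set up in the preceding exposition. The only subtlety worth flagging in the write-up is that the claim specifically concerns the \emph{initial} and \emph{final} states (not an arbitrary pair of states), which matters because the language equality $\sem{s^r(\cdot)} = \sem{s(\cdot)}^r$ need not hold pointwise at intermediate states; the equality really does rely on $z_0$ and $z_\oplus$ being the distinguished states whose roles swap under reversal.
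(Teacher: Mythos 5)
Your proof is correct and follows exactly the paper's own argument: use \cref{lem:soln-language} to identify $\sem{s(z_0)}$ and $\sem{s^r(z_\oplus)}$ with $L_\MZ$ and $L_{\MZ^r}$, chain the compatibility of reversal with $\sem{-}$ and with automaton languages, and conclude by \cref{thm:kozen}. No gaps; the closing remark about why the claim is tied to the distinguished states is a nice (if optional) clarification.
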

\begin{proof}
Since $s$ is the least solution to $\MZ$, we know by \cref{lem:soln-language} that $\sem{s(z_0)} = L_\MZ$; similarly, $\sem{s^r(z_\oplus)} = L_{\MZ^r}$. 
With the observations above, we then derive:
\[
    \sem{s^r(z_\oplus)} = L_{\MZ^r} = L_\MZ^r = \sem{s(z_0)}^r = \sem{{s(z_0)}^r}
\]
The claim follows from the above by \cref{thm:kozen}.
\end{proof}

\begin{restatable}{lemma}{satPresInitSoln}\label{lem:satPresInitSoln}
  Suppose that $\MZ$ is an automaton representing the expression $e$.
  Let $s^{+}$ be the least solution to the $w$-saturated automaton $\MZ^w$.
  Then $e \equiv_H s^{+}(z_0)$.
\end{restatable}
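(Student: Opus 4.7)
The goal is to prove $e \equiv_H s^{+}(z_0)$, which I split into two inequalities $e \leqq_H s^{+}(z_0)$ and $s^{+}(z_0) \leqq_H e$. The first is immediate: since saturation only adds $\epsilon$-transitions, $\MZ \sqsube \MZ^w$, so the restriction of $s^{+}$ to $Z$ remains a (plain) solution to $\MZ$. By leastness of $s_{\MZ}$ on $\MZ$ and \cref{thm:kozen}, $e \equiv s_{\MZ}(z_0) \leqq s^{+}(z_0)$, which is stronger than needed.

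For the nontrivial direction $s^{+}(z_0) \leqq_H e$, I plan to induct on the iterative construction of $\MZ^w$. Define $\MZ_0 := \MZ$ and $\MZ_{k+1}$ to be $\MZ_k$ augmented with every reset transition (initial or final) whose justifying $w$-path lies entirely in $\MZ_k$; since $Z$ is finite, this chain stabilises at some $\MZ_N = \MZ^w$. The invariant to maintain is that the least solution $s^{(k)}$ to $\MZ_k$ satisfies $s^{(k)}(z_0) \equiv_H e$, with the base case following from \cref{thm:kozen} applied to $\MZ$. For the inductive step, I add the newly justified resets one at a time, checking each preserves the invariant.

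To handle a single newly added reset in an intermediate automaton $\MZ^c$ with least solution $s^c$ (so $s^c(z_0) \equiv_H e$), first consider an initial reset $z \eptr z_0$ justified by a $w$-path from $z$ to $z_\oplus$ in $\MZ^c$. Let $\eta$ be a regex representing the prefix language $\{u : z_0 \str{u} z \text{ in } \MZ^c\}$. A short language analysis plus \cref{thm:kozen} shows that the new least solution at $z_0$ is $\equiv \eta^{*} \cdot s^c(z_0)$ in $\KA$. Since any prefix in $\sem{\eta}$ followed by $w$ reaches $z_\oplus$, $\eta \cdot w \leqq s^c(z_0)$ in pure $\KA$. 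Combined with the hypothesis $e \leqq_H w$, this gives $\eta \cdot s^c(z_0) \equiv_H \eta \cdot e \leqq_H \eta \cdot w \leqq s^c(z_0)$. Applying the $\KA$ fixpoint rule to $s^c(z_0) + \eta \cdot s^c(z_0) \leqq_H s^c(z_0)$ yields $\eta^{*} \cdot s^c(z_0) \leqq_H s^c(z_0) \equiv_H e$. The case of a final reset $z_\oplus \eptr z$ is dual: the new least solution at $z_0$ becomes $\equiv s^c(z_0) \cdot \nu^{*}$ where $\nu$ represents $l_{\MZ^c}(z)$, and $w \cdot \nu \leqq s^c(z_0)$, yielding $s^c(z_0) \cdot \nu^{*} \leqq_H e$ by the symmetric fixpoint argument.

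The main obstacle I foresee lies in the closed-form descriptions of the new least solution after adding a single $\epsilon$-transition ($\eta^{*} s^c(z_0)$ for an initial reset and $s^c(z_0) \nu^{*}$ for a final reset), since it is these closed forms that make the $\KA$ fixpoint rule applicable. They are most cleanly justified by a brief language-level calculation followed by \cref{thm:kozen}: adding $z \eptr z_0$ yields the language equation $L'(z_0) = L^c(z_0) + U \cdot L'(z_0)$ (with $U$ the prefix language to $z$), whose least solution is $U^{*} \cdot L^c(z_0)$. An alternative that sidesteps the wave induction would be to construct an $H$-solution $s^H$ to $\MZ^w$ directly, with $s^H(z_0) \equiv_H e$, and invoke the leastness of $s^{+}$ as a least $H$-solution; however, the obvious candidate $s_{\MZ}$ is not in general an $H$-solution to $\MZ^w$, so some genuine work is required either way.
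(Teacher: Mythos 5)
Your proof is correct, but it takes a genuinely different route through the inductive step than the paper does. Both arguments add reset transitions one at a time and maintain the invariant that the least solution at $z_0$ stays $\equiv_H e$; the difference is in how each single step is discharged. The paper never computes the new least solution: it shows that the \emph{old} least solution remains an $H$-solution of the augmented automaton (the only new constraint, $s(z_0) \leqq_H s(z)$, follows from $s(z_0) \equiv_H e \leqq_H w \leqq_H s(z)$), and then invokes leastness of $H$-solutions (\cref{lem:solnAutExtended}) to bound the new least solution by the old one. For final-state resets this argument is run on the \emph{reverse} automaton with the reversed hypothesis, and \cref{lem:rev-init-soln} transports the conclusion back to $z_0$. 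You instead compute a closed form for the new least solution at $z_0$ ($\eta^{*}\cdot s^c(z_0)$, resp.\ $s^c(z_0)\cdot\nu^{*}$) via a path-decomposition on occurrences of the single new $\epsilon$-transition plus \cref{lem:soln-language} and \cref{thm:kozen}, and then kill the star with the appropriate fixpoint axiom of $\KA$ applied inside $\KA_H$. Your version buys a fully symmetric treatment of the two reset types using the two dual fixpoint rules, avoiding reverse automata and reverse solutions altogether; the cost is the extra obligation of justifying the closed forms, which you correctly identify and which does go through (the decomposition by uses of the one new transition yields exactly the language equation you state). The paper's version avoids any explicit solution computation but needs the reversal machinery and the fact that least solutions are least $H$-solutions. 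Both are sound; yours is arguably more self-contained given that \cref{lem:rev-init-soln} and the reversal-invariance of $\KA$ would otherwise have to be set up.
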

\begin{proof}[Proof Sketch]
$\MZ^w$ can be constructed by adding one $\epsilon$-transition to $\MZ$ at a time.
One can then show that when an initial state reset is added, solutions are preserved, but only up to $H$.
Similarly, when a final state reset is added, \emph{reverse} solutions are preserved, up to $H^r$, the reveresed hypothesis.

\cref{lem:rev-init-soln} is then applied to relate the least reverse solution at the final state to the reverse of the least solution at the initial state, with its result expanded to $H$ and $H^r$-least solutions.
Therefore, at each step the least solution at the initial state does not change, up to $H$.
\end{proof}

Now we can replay the proof strategy for \cref{lem:extend-soln-T0} to obtain a corresponding lemma that $T_H$ preserves $H$-solutions.

\begin{restatable}{lemma}{extendSolnTH}\label{lem:extend-soln-TH}
Suppose $T_H^{*}(\MX)$ is finite, and $s$ is a solution to $\MX$.
We can construct an $H$-solution $s^{*}$ to $T_H^{*}(\MX)$, which moreover agrees with $s$ on all $x \in X$.
\end{restatable}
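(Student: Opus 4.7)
The plan is to essentially replay the argument for \cref{lem:extend-soln-T0}, substituting the plain patched automata with their $w$-saturations and invoking \cref{lem:satPresInitSoln} at the one step where the change really matters. Since $T_H^{*}(\MX)$ is finite, it equals $T_H^n(\MX)$ for some $n$, so by induction on $n$ it suffices to prove the claim for a single application: given a solution $s$ to $\MX$, construct an $H$-solution $s^{*}$ to $T_H(\MX)$ that agrees with $s$ on $X$.

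Writing $\MZ_1^w,\ldots,\MZ_n^w$ for the $w$-saturated copies patched at $x_1,\ldots,x_n$, and letting $s_i^{+}$ be the least solution to $\MZ_i^w$, I would define
\[
  s^{*}(y) = \begin{cases} s(y) & y \in X \\ s_i^{+}(y) \cdot r_i & y \in Z_i \end{cases}
  \qquad \text{where} \qquad r_i \;:=\; \sum_{x_i \str[\MX]{w} x'} s(x').
\]
Then I would check the \cref{def:Hsoln} inequalities by running through the four rules of \cref{def:patching}. Equations from transitions of $\MX$ are immediate from $s$ being a solution; equations from transitions of $\MZ_i^w$ (including the saturation $\epsilon$-transitions) follow because $s_i^{+}$ solves $\MZ_i^w$ and $\leqq$ is preserved under right-multiplication by $r_i$. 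For the rule-4 transitions $z_\oplus^i \tr{\epsilon} x'$, the inequality $s^{*}(x') \leqq s^{*}(z_\oplus^i)$ reduces to $s(x') \leqq s_i^{+}(z_\oplus^i) \cdot r_i$, which holds because $1 \leqq s_i^{+}(z_\oplus^i)$ and $s(x')$ is a summand of $r_i$.

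The main obstacle, as in the proof of \cref{lem:extend-soln-T0}, is the rule-3 transition $x_i \tr{\epsilon} z_0^i$, where we must show $s^{*}(z_0^i) \leqq_H s^{*}(x_i)$, i.e.\ $s_i^{+}(z_0^i) \cdot r_i \leqq_H s(x_i)$. Here the change from $T_0$ to $T_H$ bites: $s_i^{+}(z_0^i)$ is no longer $\KA$-equivalent to $e$, because saturation inflates the solution at the initial state. This is exactly where \cref{lem:satPresInitSoln} steps in to give $s_i^{+}(z_0^i) \equiv_H e$. With that in hand, the derivation
\[
  s_i^{+}(z_0^i) \cdot r_i \;\equiv_H\; e \cdot r_i \;\leqq_H\; w \cdot r_i \;\leqq\; s(x_i)
\]
goes through just as before, the last step being the same short induction on $x_i \str{w} x'$ used in \cref{lem:extend-soln-T0}. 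This completes the single-step construction, and the induction on the number of applications of $T_H$ lifts it to $T_H^{*}(\MX)$.
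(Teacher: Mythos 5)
Your proposal is correct and follows essentially the same route as the paper: reuse the construction from \cref{lem:extend-soln-T0} with the least solution to the saturated automaton $\MZ_i^w$ in place of that of $\MZ_i$, and invoke \cref{lem:satPresInitSoln} to replace the step $s_{\MZ_i}(z_0^i)\equiv e$ with $s_{\MZ_i^w}(z_0^i)\equiv_H e$ in the one derivation that changes. The only (shared) informality is that the induction on the number of $T_H$-applications technically feeds an $H$-solution rather than a plain solution into the next step, but the single-step argument goes through verbatim for $H$-solutions, so nothing is lost.
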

\begin{proof}[Proof Sketch]
Define $s^{*}$ as in the proof of \cref{lem:extend-soln-T0}, but for states in $\MZ^w_i$ use $s^w$, the least solution to $\MZ^w$, rather than $s_{\MZ}$.
This map satisfies most conditions to be a solution of $T_H(\MX)$, in particular those in $\MZ^w_i$, even for reset transitions.

It remains to validate the conditions originating from $\epsilon$-transitions into and out of the patched on automata, and these can be solved in the same way as in \cref{lem:extend-soln-T0} --- with one alteration.
To show that $s^{*}(z_{0}^i) \leqq s^{*}(x_i)$ for each $x_i$, instead of using $s_{\MZ}(z_0^i) \equiv e$, we now use that $s_{\MZ^w}(z_0^i) \equiv_H e$ (per \cref{lem:rev-init-soln}).
\end{proof}

The proof that $r_H$ is a partial reduction is effectively the same as that of \cref{thm:partialRedTZ}, using analogous lemmas about $T_H$, so we do not spell it out here.

\begin{restatable}{theorem}{partialRedTH}\label{thm:partialRedTH}
$r_H$ is a partial reduction.
\end{restatable}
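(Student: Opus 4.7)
The plan is to mirror the proof of \cref{thm:partialRedTZ} step-for-step, substituting the saturation-aware analogues of the two key lemmas. Fix $g \in \Exp$ on which $r_H$ is defined, and let $\MX$ be its finite automaton, so that $\sem{g} = L_{\MX}$ and, by \cref{thm:kozen} applied to the least solution $s$ of $\MX$, one has $s(x_0) \equiv g$. By \cref{def:reduction}, I must verify the two defining conditions: $\sem{g}_H \sube \sem{r_H(g)}$ and $r_H(g) \leqq_H g$.

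For the semantic inclusion, I would invoke \cref{lem:THcorrect} at the initial state $x_0$ of $\MX$, unwinding definitions as follows:
\[ \sem{g}_H = H^{*}(L_{\MX}) = H^{*}(l_{\MX}(x_0)) \sube l_{T_H^{*}(\MX)}(x_0) = L_{T_H^{*}(\MX)} = \sem{r_H(g)}. \]

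For the algebraic inequality, let $s'$ denote the least solution to $T_H^{*}(\MX)$, which exists because $r_H$ being defined on $g$ means $T_H^{*}(\MX)$ is finite; by \cref{def:rH} we have $r_H(g) = s'(x_0)$. I would then apply \cref{lem:extend-soln-TH} to extend $s$ to an $H$-solution $s^{*}$ of $T_H^{*}(\MX)$ that agrees with $s$ on the original state set $X$. Since \cref{lem:solnAutExtended} tells us that the plain least solution $s'$ is also a least $H$-solution, comparing it to $s^{*}$ gives
\[ r_H(g) = s'(x_0) \leqq_H s^{*}(x_0) = s(x_0) \equiv g, \]
which is the desired inequality.

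The main difficulty is not in this theorem itself but lies entirely in the supporting lemmas. In particular, \cref{lem:extend-soln-TH} (and, transitively, \cref{lem:satPresInitSoln}) is where saturation must be shown to be invisible modulo $H$; this is precisely the adjustment that replaces $s_{\MZ}(z_0^i) \equiv e$ in the proof of \cref{lem:extend-soln-T0} with $s_{\MZ^w}(z_0^i) \equiv_H e$ here, exploiting the extra slack that $\leqq_H$ affords over $\leqq$. Once those lemmas are granted, the top-level argument above is structurally identical to that of \cref{thm:partialRedTZ} and presents no further obstacles.
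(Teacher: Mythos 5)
Your proposal is correct and follows essentially the same route as the paper's own proof: the semantic inclusion via \cref{lem:THcorrect}, and the algebraic inequality by comparing the least solution of $T_H^{*}(\MX)$ (which is a least $H$-solution by \cref{lem:solnAutExtended}) with the $H$-solution extending $s$ given by \cref{lem:extend-soln-TH}. You also correctly identify that all the saturation-specific work is concentrated in \cref{lem:extend-soln-TH} and \cref{lem:satPresInitSoln}, exactly as the paper indicates.
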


Using \cref{lem:reduction-complete} we then conclude decidability and completeness for expressions in the domain of $r_H$, i.e., expressions where $T_H^{*}$ outputs a finite automaton.

\begin{corollary}\label{cor:partcompTH}
  Let $g,h$ be expressions for which $r_H$ is defined.
  If $\sem{g}_H = \sem{h}_H$, then $g \equiv_H h$; moreover, the latter is decidable.
  Thus, if $r_H$ is known to output finite automata for all expressions in $\Exp$, then $\KA_H$ is complete and decidable.
\end{corollary}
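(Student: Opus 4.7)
The plan is to observe that this corollary is essentially a direct instantiation of Lemma \ref{lem:reduction-complete} (the generic completeness/decidability lemma for partial reductions) with the specific partial reduction $r_H$ provided by Theorem \ref{thm:partialRedTH}. So the work is entirely in invoking the correct pieces --- no new technical content is required.

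First I would fix expressions $g, h \in \Exp$ on which $r_H$ is defined, and assume $\sem{g}_H = \sem{h}_H$. By Theorem \ref{thm:partialRedTH}, $r_H$ is a partial reduction for $H$, meaning it satisfies the two conditions of Definition \ref{def:reduction} on every expression in its domain, in particular on $g$ and $h$. I would then apply Lemma \ref{lem:reduction-complete} directly: its statement says precisely that for any partial reduction $r$ and any expressions $e, f$ in the domain of $r$, the hypothesis closure semantic equality $\sem{e}_H = \sem{f}_H$ entails $e \equiv_H f$, and moreover this relation is decidable at such $e, f$. Plugging $r := r_H$, $e := g$, $f := h$ yields $g \equiv_H h$ and the decidability of the latter.

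For the second part, if $r_H$ is total --- that is, $T_H^*$ produces a finite automaton from every input automaton --- then the above applies to arbitrary $g, h \in \Exp$, which is exactly the definition of $\KA_H$ being complete and decidable with respect to $\sem{-}_H$.

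I do not anticipate any obstacle: both Theorem \ref{thm:partialRedTH} and Lemma \ref{lem:reduction-complete} have already been established, and the corollary is their direct composition. The only thing to be careful about is quoting Lemma \ref{lem:reduction-complete} in a way that makes clear it applies to \emph{partial} reductions restricted to their domain (not just total ones), which the statement of that lemma already makes explicit.
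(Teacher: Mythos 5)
Your proposal is correct and matches the paper's own reasoning exactly: the corollary is obtained by instantiating \cref{lem:reduction-complete} with the partial reduction $r_H$ supplied by \cref{thm:partialRedTH}, and the totality remark for the second part is handled the same way. No gaps.
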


$T_0^{*}$ never outputs a finite automaton when $T_H^{*}$ does not, and $T_H^{*}$ outputs finitely for the hypothesis/expression pairs from \cref{ex:sat-motiv}.
But does $T_0^*$ output a finite automaton \emph{when it can?}
Unfortunately, this is not the case.

\begin{example}\label{ex:TH-nonterm}
  Let $H := \{ \la\lb \leq \lb\la \}$, and consider $\lb\la^{*}$.
  The hypothesis closure of this expression is $\la^{*}\lb\la^{*}$.
  $T_H^{*}$ proceeds infinitely, as sketched below:
\[\begin{tikzcd}
	\textcolor{rgb,255:red,214;green,92;blue,92}{\circ} && \circ & \bullet && \circ & \bullet \\
	&&& \textcolor{rgb,255:red,214;green,92;blue,92}{\bullet} &&& \bullet & \bullet \\
	\textcolor{rgb,255:red,214;green,92;blue,92}{\oplus} && \textcolor{rgb,255:red,214;green,92;blue,92}{\oplus} & \textcolor{rgb,255:red,214;green,92;blue,92}{\bullet} && \textcolor{rgb,255:red,214;green,92;blue,92}{\oplus} & \bullet & \textcolor{rgb,255:red,214;green,92;blue,92}{\bullet} \\
	&&&&&&& \textcolor{rgb,255:red,214;green,92;blue,92}{\bullet}
	\arrow[""{name=0, anchor=center, inner sep=0}, "\lb", color={rgb,255:red,214;green,92;blue,92}, from=1-1, to=3-1]
	\arrow["\epsilon", from=1-3, to=1-4]
	\arrow[""{name=1, anchor=center, inner sep=0}, "\lb", from=1-3, to=3-3]
	\arrow["\la", from=1-4, to=2-4]
	\arrow[""{name=2, anchor=center, inner sep=0}, draw=none, from=1-4, to=3-4]
	\arrow["\epsilon", from=1-6, to=1-7]
	\arrow[""{name=3, anchor=center, inner sep=0}, "\lb"', from=1-6, to=3-6]
	\arrow["\la", from=1-7, to=2-7]
	\arrow["\lb", color={rgb,255:red,214;green,92;blue,92}, from=2-4, to=3-4]
	\arrow["\epsilon", from=2-7, to=2-8]
	\arrow["\lb", from=2-7, to=3-7]
	\arrow["\la", from=2-8, to=3-8]
	\arrow["\la", color={rgb,255:red,214;green,92;blue,92}, from=3-1, to=3-1, loop, in=145, out=215, distance=10mm]
	\arrow["\la", color={rgb,255:red,214;green,92;blue,92}, from=3-3, to=3-3, loop, in=145, out=215, distance=10mm]
	\arrow["\epsilon"', color={rgb,255:red,214;green,92;blue,92}, from=3-4, to=3-3]
	\arrow["\la", color={rgb,255:red,214;green,92;blue,92}, from=3-6, to=3-6, loop, in=145, out=215, distance=10mm]
	\arrow["\epsilon"', from=3-7, to=3-6]
	\arrow["\lb", color={rgb,255:red,214;green,92;blue,92}, from=3-8, to=4-8]
	\arrow["\epsilon", color={rgb,255:red,214;green,92;blue,92}, curve={height=-12pt}, from=4-8, to=3-6]
	\arrow["{T_H}", between={0.4}{0.6}, Rightarrow, from=0, to=1]
	\arrow["{T_H}", between={0.4}{0.6}, Rightarrow, from=2, to=3]
\end{tikzcd}\]
Since $\lb\la$ does not appear as a path in the patched-on automaton, there is nothing to saturate.
Indeed, it is only \emph{in the context of $\MX$} that the need to introduce a loop becomes clear.
This separates this example from saturation as an operation independent from patching.
A further alteration to $T_H$ to deal with this sort of scenario is an exciting endeavour, but is left for now to future work.
\end{example}

Similar to $T_0^{*}$, $T_H^{*}$ can be applied to ``independent'' sets of hypotheses.
There is already significant work towards creating ``modular'' completeness proofs by composing reductions~\cite{tools}.
However, this work is orthogonal to ours --- composing reductions benefits from a formula for creating new ones, which $T_H^{*}$ offers.

These techniques can also be used to construct reductions to \emph{any system that reduces to} $\KA$, via an existing reduction.
For example, as per~\cite{tools}, both $\KAT$ and $\netkat$ can be regarded as $\KA$ with infinitely many hypotheses.
Reductions introduced there can be combined with one produced by $T_H^{*}$ to show completeness of $\KAT$ with some hypotheses added, or decide equivalence of two $\netkat$-expressions with hypotheses.

\section{Conclusion \& Further Work}\label{sec:conclusion}

We have presented a general strategy to construct reductions for Kleene algebra with a linear hypothesis, through a construction on automata.
Decidability and completeness of the relevant system follow for the cases where the construction terminates.
This strategy can then be used for sets of independent linear hypotheses as well, and recover known reductions in the literature~\cite{ckao,tools}.

Hypothesis closure provides a useful canonical language semantics, but they are no substitute for the insight that a ``hand-crafted'' semantics provides.
It remains an important task to develop these for sufficiently ubiquitous systems.

There are also sets of hypotheses that the tools given here do not cover, namely those where hypotheses directly interact with one another.
Another difficulty is with general hypotheses of the form $e \leq f$, where $f$ is some general regular expression.
It is a natural direction for future work to expand the automaton construction to deal with these uncovered cases.
This would allow for creating reductions to $\KA$, or other known-complete systems like $\KAT$, in a wider array of Kleene algebras with hypotheses.

Another direct line of future work is studying when the construction ``succeeds'', i.e., outputs a finite automaton; or indeed, when that is even possible. Characterising when the construction can succeed would allow for more precise decidability and completeness results.
Commutativity hypotheses in particular have already seen investigation, in~\cite{maarand-uustalu}, where they allow for modelling ``parallel actions'' in the style of Mazurkiewicz traces.

Finally, a word on a possible application of this work.
In~\cite{braibant-pous}, Braibant and Pous develop a tactic in Rocq to automatically prove equivalence of regular expressions using completeness and decidability of \KA.
The techniques introduced in this paper could be used to expand this tactic, where the user designates relevant facts in the context as hypotheses.

\begin{credits}
\subsubsection{\ackname}
The authors want to thank Yde Venema for valuable comments early on, and Bas Laarakker and Sean Prendi for proofreading.
This work was supported by the Dutch research council (NWO) under grant no.\ VI.Veni.232.286 (ChEOpS).
\end{credits}

\newpage

\bibliographystyle{splncs04}
\bibliography{bibliography.bib}

\ifarxiv%

\clearpage

\appendix
\section{Proof of \cref{ex:comm-while}}\label{app:example}

\noindent\emph{\cref{ex:comm-while}.} Program 3 and Program 4 are inequivalent in standard regular language semantics, but equivalent with the hypotheses:
\[ H = \{ \pb \cdot \pa = \pa \cdot \pb,\ \pb \cdot \test_1 = \test_1 \cdot \pb,\ \pb \cdot \overline{\test_{1}} = \overline{\test_{1}} \cdot \pb \}.\]
More precisely,
\[ \sem{
    \pb \cdot {(\test_1 \cdot \pa)}^{*} \cdot \overline{\test_{1}}
  }
  \neq
  \sem{
    {(\test_1 \cdot \pa)}^{*} \cdot \overline{\test_{1}} \cdot \pb.
  } \]
and therefore, by soundness of \KA,
\[
  \pb \cdot {(\test_1 \cdot \pa)}^{*} \cdot \overline{\test_{1}}
  \not\equiv
   {(\test_1 \cdot \pa)}^{*} \cdot \overline{\test_{1}} \cdot \pb.
\]
Where on the other hand, using the hypotheses, 
\[
  \pb \cdot {(\test_1 \cdot \pa)}^{*} \cdot \overline{\test_{1}}
  \equiv_{H}
   {(\test_1 \cdot \pa)}^{*} \cdot \overline{\test_{1}} \cdot \pb.
\]

\begin{proof}

  It's clear to see that the two expressions have different regular language semantics, because for example, the word $\pb \test_1 \pa \overline{\test_1}$ is in the semantics of the left expression, and not of the semantics of the right one.

  On the other hand, to prove that the two expressions are equivalent, we cannot rely on completeness of \KA (because we have added extra hypotheses) and so we must prove it directly. To this end we claim that both expressions are equivalent to an intermediate one:
  \begin{align*}
    \pb \cdot \aloop \cdot \overline{\test_1} &\equiv \aloop \cdot \pb \cdot \aloop \cdot \overline{\test_1} \\
    \aloop \cdot \overline{\test_1} \cdot \pb &\equiv \aloop \cdot \pb \cdot \aloop \cdot \overline{\test_1}
  \end{align*}
We already apply our hypothesis $\pb \cdot \overline{\test_1} = \overline{\test_1} \cdot \pb$, and observe that proving the above equivalences is equivalent to proving the following, by just stripping $\overline{\test_1}$ off the right side of all expressions:
  \begin{align*}
    \pb \cdot \aloop &\equiv \aloop \cdot \pb \cdot \aloop \\
    \aloop \cdot \pb &\equiv \aloop \cdot \pb \cdot \aloop
  \end{align*}
  Note that since $x^{*} \equiv 1 + xx^{*}$ for all $x$, it follows that $1 \leqq x^{*}$ and therefore $1 \leqq \aloop$. In this way, the $\leqq$ direction of both equivalences follows immediately. It remains to show that:
  \begin{align*}
     \aloop \cdot \pb \cdot \aloop &\leqq \pb \cdot \aloop\\
     \aloop \cdot \pb \cdot \aloop &\leqq \aloop \cdot \pb.
  \end{align*}
  We begin with the former. Recall the following \KA axiom:
  \[ e + f \cdot g \leq g \Longrightarrow f^{*} \cdot e \leq g. \]
  If we identify $f = (\test_1 \cdot \pa)$ and $e,g = \pb \cdot \aloop$, we see that to prove the above it suffices to show that:
  \[ \pb \cdot \aloop + (\test_1 \cdot \pa) \cdot \pb \cdot \aloop \leqq \pb \cdot \aloop\]
  By idempotence, this is equivalent to showing that:
  \[ (\test_1 \cdot \pa) \cdot \pb \cdot \aloop \leqq \pb \cdot \aloop.\]
  To show this, we reason:
  \begin{align*}
    (\test_1 \cdot \pa) \cdot \pb \cdot \aloop &\equiv_H \pb \cdot (\test_1 \cdot \pa) \cdot \aloop \\
    &\leqq\hspace{.3cm} \pb \cdot \aloop
  \end{align*}
  where the first step follows by the hypotheses that $\pb$ commutes with the test $\test_1$ and the action $\pa$. The inequality follows because $\aloop \equiv 1 + (\test_1 \cdot \pa) \cdot \aloop$ (axiom), and so we have $(\test_1 \cdot \pa) \cdot \aloop \leqq \aloop$.

  Now it remains to prove:
  \[ \aloop \cdot \pb \cdot \aloop \leqq \aloop \cdot \pb \]
  The proof proceeds in a symmetric fashion. Recall the \KA axiom:
  \[ e + f \cdot g \leq f \Longrightarrow e \cdot g^{*} \leq f. \]
  Identifying $g = (\test_1 \cdot \pa)$ and $e,f = \aloop \cdot \pb$, it suffices to prove:
  \[ \aloop \cdot \pb + \aloop \cdot \pb \cdot (\test_1 \cdot \pa) \leq \aloop \cdot \pb. \]
  By idempotency, this is equivalent to proving:
  \[ \aloop \cdot \pb \cdot (\test_1 \cdot \pa) \leq \aloop \cdot \pb. \]
  We recall the fact that $(\test_1 \cdot \pa) \cdot \aloop \leqq \aloop$ and complete the proof:
  \begin{align*}
    \aloop \cdot \pb \cdot (\test_1 \cdot \pa) &\equiv_H \aloop \cdot (\test_1 \cdot \pa) \cdot \pb \\
    &\leqq\hspace{.3cm} \aloop \cdot \pb.
  \end{align*}
\end{proof}
\newpage

\section{Detailed Proofs}\label{app:proofs}

We now present proofs of the lemmas from the main text, and remind the reader of the assumption below, made at the beginning of \cref{sec:patching}.
\bigAssumption*

\termCond*
\begin{proof}
Suppose that for every $x \in X$, $w\inv l_{\MX}(x) \sube \sem{e}\inv l_{\MX}(x)$. 
Let $x \in X$; we want to show that $l_{\MX}(x) = H(l_{\MX}(x))$.
The $\sube$ direction follows by the definition of $H$.
For the $\supe$ direction, we instantiate \cref{def:hyp-cl-lang}:
\[ H(l_{\MX}(x)) := l_{\MX}(x) \cup \bigcup \{ u \sem{e} v : u w v \in l_{\MX}(x) \} \]
so it suffices to show that if $u w v \in l_{\MX}(x)$, then $u \sem{e} v \sube l_{\MX}(x)$.
To this end, take $uw'v$ where $w' \in \sem{e}$ and $uwv \in l_{\MX}(x)$.
Let $x'$ be a state reached after $u$ has been read in any accepting trace of $uwv$.
Naturally, $wv \in l_{\MX}(x')$, and so $v \in w\inv l_{\MX}(x')$.
Because $w\inv l_{\MX}(x') \sube \sem{e}\inv l_{\MX}(x')$, we have $v \in \sem{e}\inv l_{\MX}(x')$, whence $w'v \in l_{\MX}(x')$.
Since we could read $u$ while moving from $x$ to $x'$, $uw'v \in l_{\MX}(x)$.

For the other direction, suppose that for every $x \in X$, $l_{\MX}(x) = H(l_{\MX}(x))$.
We want to show that $w\inv l_{\MX}(x) \sube \sem{e}\inv l_{\MX}(x)$. 
Let $v \in w\inv l_{\MX}(x)$, so $wv \in l_{\MX}(x)$.
By the definition of $H$, $\sem{e} v \sube H(l_{\MX}(x))$.
But we assumed that $H(l_{\MX}(x)) = l_{\MX}(x)$, so $\sem{e} v \sube l_{\MX}(x)$, and finally $v \in \sem{e}\inv l_{\MX}(x)$.
\end{proof}

\TZcorrect*
\begin{proof}
It suffices to prove the claim for one application of $H$ and $T_0$, i.e.:
\[ H(l_\MX(x)) \sube l_{T_0(\MX)}(x). \]
To this end, we prove that when $uwv \in l_\MX(x)$ and $w' \in \sem{e}$, $uw'v \in l_{T_0(\MX)}(x)$.
If $uw'v \in l_{\MX}(x)$ then we're done, so assume it is not.
We dissect the trace witnessing that $uwv \in l_\MX(x)$ as follows:
\[ x \str{u} y \str{w} y' \str{v} x_\oplus. \]
Note that $wv \in l_\MX(y)$, and by our assumption above, $w'v \not\in l_\MX(y)$ --- otherwise we could sequence it after $u$ and have an $\MX$-trace from $x$.
This tells us that $l_{\MX}(y)$ is not $H$-closed, and thus, by \cref{lem:term-cond}, $w\inv l_\MX(y) \not\sube \sem{e}\inv l_\MX(y)$.
This being the case, in $T_0(\MX)$ the state $y$ will be $w$-patched.
That is, for some $i$, $\MX_{i+1} = \MX_i\patch{y}$.
We can now produce a new $\MX_{i+1}$ trace:
\[ x \str{u} y \eptr z_0^i \str{w'} z_{\oplus}^i \eptr  y' \str{v} x_\oplus \]
to see that $uw'v \in l_{\MX_{i+1}}(x)$.
Finally, we observe that patching is inflationary with respect to $\sqsube$, since it only adds states and transitions.
Therefore $\MX_i \sqsube \MX_n = T_0(\MX)$.
As a result, $uw'v \in l_{T_0(\MX)}(x)$, and we are done.
\end{proof}

\solnAutExtended*
\begin{proof}
Let $s$ be a (least) solution to $\MX$.
For $s$ to be an $H$-solution to $\MX$, it must satisfy the same equations, but with $\leqq_H$ in place of $\leqq$.
Since $\KA_H$ contains all of the axioms of $\KA$, this is easily seen to be the case.

To show that any least solution is also a least $H$-solution, we refer to the proof of \cref{lem:least-soln-construct} as seen in e.g.~\cite{kozen-thm,conway-alg-aut}.
In this proof, a solution is constructed, and shown to be least.
We can use the same strategy with $\leqq_H$ in place of $\leqq$ to construct a least $H$-solution.
However, in doing so, the very same map from $X$ to $\Exp$ will be produced.
Therefore, this least solution is a least $H$-solution.
Similar proofs can also be found in~\cite[Lemma~3.12]{cka-fm}.
\end{proof}

\extendSolnTZ*
\begin{proof}
Since $T_0^{*}(\MX)$ is a finite automaton, $T_0^{*}(\MX) = T_0^n(\MX)$ for some $n \in \N$.
Therefore, we only concern ourselves with \emph{one} application of $T_0$, and we can re-apply the result to each successive operation of $T_0$.

By definition, $T_0(\MX)$ consists of $\MX$ with finitely many copies of the automaton $\MZ$, representing $\sem{e}$, patched on.
We will call these automata $\MZ_1, \ldots \MZ_n$ and the states to which they are $w$-patched $x_1, \ldots x_n$.
The state set $T_0(X)$ partitions into $X \cup Z_1 \cup \cdots \cup Z_n$.

All states in $\MX$ except $x_1, \dots, x_n$ have transitions to other states in $\MX$ exclusively, whereas the $x_i$ also transition to their corresponding $z_0^i$.
Similarly, the states of $\MZ_i$ only have outgoing transitions to other $\MZ_i$ states (with the same $i$), except for the $z_\oplus^i$ which also has an $\epsilon$-transition back to all $x_i'$ such that $x_i \str{w} x_i'$.
In total, $s': X \cup Z_1 \cup \cdots \cup Z_n \to \Exp$ is an $H$-solution to $T_0(\MX)$ if it satisfies:
\begin{mathpar}
    \inferrule{~}{%
        1 \leqq_H s'(x_\oplus)
    }
    \and
    \inferrule{%
        x \tr[\MX]{a} x'
    }{%
        a \cdot s'(x') \leqq_H s'(x)
    }
    \and
    \inferrule{%
        z \tr[\MZ_i]{a} z'
    }{%
        a \cdot s'(z') \leqq_H s'(z)
    }
    \\
    \inferrule{%
      i \leq n
    }{%
       s'(z_0^i) \leqq_H s'(x_i)
    }
    \and
    \inferrule{%
      x_i \str[\MX]{w} x'
    }{%
       s'(x') \leqq_H s'(z_\oplus^i)
    }
\end{mathpar}
Where $a \in \Sigma \cup 1$, implicitly interpreting 1 as $\epsilon$ in transitions.
We will now extend $s$, the least solution to $\MX$, into an $H$-solution for $T_0(\MX)$.
Let $s_{\MZ}$ be the least solution to $\MZ$, and thus $\MZ_i$ for every $i \leq n$.
We define $s^{*}: X \cup Z_1 \cup \cdots \cup Z_n \to \Exp$ as follows:
\[
s^{*}(y) = \begin{cases}
     s(y) & y \in X \\
     s_{\MZ}(y) \cdot r_i & y \in Z_i
\end{cases}
\quad\quad\text{where}\quad\quad
r_i := \sum_{x_i \str[\MX]{w} x' }s(x')
\]
By definition, $s^{*}$ agrees with $s$ for all $x \in X$, so it remains to show that $s^*$ satisfies the rules above.
Since $s^{*}$ coincides with $s$ on all states in $\MX$, the first two rules hold for $\leqq$, and hence for $\leqq_H$.
The third rule comes down to showing that, if $z \str[\MZ_i]{a} z'$, then $a \cdot s_\MZ(z') \cdot r_i \leqq_H s_\MZ(z) \cdot r_i$, which follows by the fact that $s_\MZ$ is a solution to $\MZ_i$, and monotonicity of sequential composition.

It remains to show that, for each $1 \leq i \leq n$, we have $s^*(z_0^i) \leqq_H s^*(x_i)$.
By definition of $s^*$ and $r_i$ it suffices to show that, for any $i$ and $x'$ with $x_i \str[\MX]{w} x'$:
\[ s_{\MZ_i}(z_0^i) \cdot s(x') \leqq_H s(x_i). \]
Now since $\MZ_i$ is an automaton representing $\sem{e}$ and $z_0^i$ is its initial state, $s_{\MZ_i}(z_0^i) \equiv e$ (by \Cref{thm:kozen}).
Applying this fact and our hypothesis $e \leq w$,
\[ s_{\MZ_i}(z_0^i) \cdot s(x') \leqq e \cdot s(x') \leqq_H w \cdot s(x') \leqq s(x_i) \]
where the final containment can be argued using the fact that $s$ is a solution to $\MX$, by induction on the construction of the trace $x_i \str[\MZ]{w} x'$.
\end{proof}

\satPresInitSoln*
\begin{proof}
Suppose that $s$ is an $H$-solution to $\MZ$ satisfying $s(z_0) \equiv_H e$.
Let $\MZ_f$ be an automaton obtained by adding an initial state reset to some state $z$ in $\MZ$.
Then we claim that $s$ is an $H$-solution to $\MZ_f$ as well.
Given that $\MZ_f$ is just $\MZ$ with one more $\epsilon$-transition, the majority of the necessary equations are satisfied immediately.
The only non-trivial equation comes from the new $\epsilon$-transition:
\[ s(z_0) \leqq_H s(z). \]
Because the $\epsilon$-transition $z \eptr z_0$ was added as an initial state reset, $z \str{w} z_\oplus$.
Since $s$ is an $H$-solution, it follows (by an easy induction) that $w \leqq_H s(z)$.
Stringing these together with our hypothesis $e \leq w$, we see that:
\[ s(z_0) \equiv_H e \leqq_H w \leqq_H s(z) \]
as desired.
So adding an initial state reset preserves the $H$-solutions of $\MZ$.

Now we make the symmetric claim: if $s^r$ is a reverse $H^r$-solution to $\MZ$ satisfying $s^r(z_\oplus) \equiv_{H^r} e^r$ where $H^r = \{ e^r \leq w^r \}$, and $\MZ_b$ is an automaton obtained by adding a final state reset to some state $z$ in $\MZ$, then $s^r$ is a reverse $H^r$-solution to $\MZ_b$ as well.
As with the first case, the majority of the requiremed equations are satisfied by definition.
We are left needing to prove:
\[ s^r(z_\oplus) \leqq_{H^r} s^r(z) \]
and we prove it in a symmetric way.
Since the final state reset $z_{\oplus} \eptr z$ was added, $z_0 \str{w} z$, so $w^r \leqq_{H^r} s^r(z)$.
We then derive as follows:
\[ s^r(z_\oplus) \equiv_{H^r} e^r \leqq_{H^r} w^r \leqq_{H^r} s^r(z). \]
So adding a final state reset preserves \emph{reverse} $H^r$-solutions for the automaton $\MZ$.
We now combine these two results to prove the lemma.

Our starting point is $\MZ$, an automaton representing $e$.
Let $\MZ_0 := \MZ$, and $\MZ_1, \MZ_2, \ldots \MZ_n$ be individual steps of adding either type of reset transition so that $\MZ_n = \MZ^w$.
Let $s_0, s_1, s_2, \ldots s_n$ be the least solutions to these automata, so $s_n = s^{+}$.
We claim that for all $i \leq n$, $s_i(z_0) \equiv_H e$.
We proceed by induction.
The base case is clear: $s_0$ is the least solution to $\MZ_0 = \MZ$ which is by definition the automaton representing the expression $e$, so by \cref{lem:soln-language} $\sem{s(z_0)} = \sem{e}$ and by \cref{thm:kozen}, $s(z_0) \equiv e$, allowing us to conclude that $s(z_0) \equiv_H e$. 

For the inductive step, we assume that $e \equiv_H s_i(z_0)$ where $s_i$ is the least solution to $\MZ_i$.
Now to obtain $\MZ_{i+1}$, we either add an initial state reset or a final state reset.
Suppose that we added an initial state reset.
Then we proved above that this means $s_i$ is also an $H$-solution to $\MZ_{i+1}$.
Therefore $s_{i+1}(z_0) \leqq_H s_i(z_0)$.
Since $\MZ_i \sqsubseteq \MZ_{i+1}$, we also have $s_i(z_0) \leqq_H s_{i+1}(z_0)$, meaning that:
\[ s_{i+1}(z_0) \equiv_H s_i(z_0) \equiv_H e. \]
Now suppose to get from $\MZ_i$ to $\MZ_{i+1}$, we added a final state reset.
Let $s_i^r, s_{i+1}^r$ be the least reverse solutions to these automata.
We use reasoning symmetric to the previous case to find that $s_{i+1}^r(z_\oplus) \equiv_{H^r} s_i^r(z_\oplus)$.
Taking a proof of this equivalence, we can then revert all of the expressions (including the hypotheses) to find that ${s_{i+1}^r(z_\oplus)}^r \equiv_H {s_i^r(z_\oplus)}^r$.
Using \cref{lem:rev-init-soln}, we can then conclude:
\[ s_{i+1}(z_0) \equiv {s_{i+1}^r(z_\oplus)}^r \equiv_H {s_i^r(z_\oplus)}^r \equiv s_i(z_0) \equiv_H e. \qedhere \]
\end{proof}

\extendSolnTH*
\begin{proof}
The proof largely follows the structure of \cref{lem:extend-soln-T0}, but with $T_H$ instead of $T_0$, and using $s_{\MZ_i^w}$ (the least solution to $\MZ^w_i$) rather than the least solution to $\MZ_i$.
At the very end, where an appeal to \cref{thm:kozen} is made to use that $s_{\MZ_i}(z_0^i) \equiv e$, we instead use \cref{lem:satPresInitSoln} to find that $s_{\MZ_i^w}(z_0^i) \equiv_H e$; this also suffices to make the final derivation go through.
\end{proof}

\partialRedTH*
\begin{proof}
  To show that $r_H$ is a partial reduction, we must show for all $g \in \Exp$ on which it is defined the following two properties hold:
  \begin{mathpar}
    \sem{g}_{H} \sube \sem{r_H(g)} \and r_H(g) \leqq_H g
  \end{mathpar}
The former follows from \cref{lem:THcorrect}: if $\MX$ is the automaton representing $g$, then $T_H^{*}(\MX)$ is finite (since $r_H$ is defined on $g$), and:
\[ \sem{g}_{H} = H^*(\sem{g}) = H^{*}(L_{\MX}) \sube L_{T_H^{*}(\MX)} = \sem{r_H(g)}. \]
For $r_H(g) \leqq_H g$, let $s$ be the least solution to $\MX$.
By \Cref{thm:kozen}, $s(x_0) \equiv g$.
Similarly let $s'$ be the least solution to $T_H^{*}(\MX)$; by definition, $s'(x_0) = r_H(g)$.
\cref{lem:extend-soln-T0} gives an $H$-solution $s^*$ to $T_H^{*}(\MX)$ such that $s^*(x) = s(x)$ for every $x \in X$.
Since $s'$ is the \emph{least} solution to $T_H^{*}(\MX)$, it is a least $H$-solution as well:
\[ r_H(g) = s'(x_0) \leqq_{H} s^*(x_0) = s(x_0) \equiv g. \qedhere \]
\end{proof}

\fi

\end{document}